\newtheorem{thm}{Theorem}[section]
\newtheorem{res}{Result}[section]
\newtheorem{prop}[thm]{Proposition}
\newtheorem{lem}[thm]{Lemma}
\theoremstyle{definition}
\newtheorem{exmp}[thm]{Example}
\theoremstyle{remark}
\let\c@equation\c@thm
\numberwithin{equation}{section}
\title{Modeling knotted proteins with tangles}
\author{
  Isabel K. Darcy, Garret Jones, and
 Puttipong Pongtanapaisan}
\begin{document}

 \begin{abstract}
Although rare, an increasing number of proteins have been observed to contain entanglements in their native structures. To gain more insight into the significance of protein knotting, researchers have been investigating protein knot formation using both experimental and theoretical methods. Motivated by the hypothesized folding pathway of $\alpha$-haloacid dehalogenase (DehI) protein, Flapan, He, and Wong proposed a theory of how protein knots form, which includes existing folding pathways described by Taylor and B{\"o}linger et al. as special cases. In their topological descriptions, two loops in an unknotted open protein chain containing at most two twists each come close together, and one end of the protein eventually passes through the two loops. In this paper, we build on Flapan, He, and Wong's theory where we pay attention to the crossing signs of the threading process and assume that the unknotted protein chain may arrange itself into a more complicated configuration before threading occurs. We then apply tangle calculus, originally developed by Ernst and Sumners to analyze the action of specific proteins on DNA, to give all possible knots or knotoids that may be discovered in the future according to our model and give recipes for engineering specific knots in proteins from simpler pieces. We show why twists knots are the most likely knots to occur in proteins.  
We use chirality to show that the most likely knots to occur in proteins via Taylor's twisted hairpin model are the knots $+3_1$, $4_1$, and $-5_2$.
\end{abstract}

\maketitle
\section{Introduction}

Living organisms depend on proteins for a wide range of functions. Among many other important roles, proteins can provide structural stability to cells, catalyze metabolic reactions, recognize pathogens, and transcribe DNA \cite{lesk2010introduction,price2009exploring}. A protein's function is largely dependent on its structure, and proteins' failure to fold properly can cause many diseases including Alzheimer's disease and Parkinson's disease \cite{selkoe2003folding,flapan2015knots}. Therefore, it is of great interest to know exactly how a protein adopts its native three-dimensional state. Theory has suggested that the folding occurs according to a funneled, minimally frustrated energy landscape in which the native state is at the bottom \cite{bryngelson1987spin, mallam2009does}. Despite the many possible conformations, under proper conditions, naturally selected proteins reliably fold into their native conformation. In general, this process is relatively rapid: on the scale of microseconds for some proteins. 
Given the rapid and spontaneous manner in which many proteins fold, one would not expect complicated structures like knots to occur in native states. Nevertheless, research has shown that some proteins do indeed form local knots; some with as many as seven crossings \cite{brems2022alphafold}.

There are several models researchers have used to study protein knots. The first option is to connect the terminal ends of the protein to obtain a circle in such a way that the existing entanglement is preserved and the knot type can be identified \cite{millett2005tying,lua2006statistics,millett2013identifying,alexander2017proteins,goundaroulis2017studies}. A mathematical \textit{knot} is an embedding of a circle into three-dimensional space. 
Fig. \ref{fig:proteinknots} \cite{mansfield1994there,taylor2000deeply,virnau2006intricate,bolinger2010stevedore}(a)-(e) depicts five knot types that can be found in the protein data base (PDB). There are also knots with six and seven crossings whose structures were determined computationally via the machine learning protein prediction method AlphaFold \cite{jumper2021highly,brems2022alphafold}. For example, the knot shown in Fig. \ref{fig:proteinknots}(f) was found in the backbone of the von Willebrand factor A domain-containing protein 5A (BCSC-1, breast cancer suppressor candidate 1) \cite{perlinska2022new}.

To avoid the ambiguity caused by closing up the open ends of a protein, some researchers have also been treating a protein conformation by projecting the 3D shape of a protein onto a sphere or a plane. The result is an open arc, and one declares two projections to be equal if they differ by a finite sequence of combinatorial local moves. These moves are applied away from the end points since any open curve can be untangled if one allows the open end points to move freely. Open arcs up to this equivalence relation were originally studied by Turaev and they are referred to as \textit{knotoids} \cite{turaev2012knotoids}.

 A knot is identified using the notation $n_k$ where $n$ is called the \textit{crossing number}, which is the minimum number of crossings needed to represent the knot as a projection in the 2D-plane. The subscript $k$ means this knot is the $kth$ knot in a list of $n$ crossing knots per the tables at KnotInfo \cite{livingston2020knotinfo}. A knot can have many configurations, and two conformations are \textit{equivalent} if we can deform one into the other without cutting and regluing. When a knot is not equivalent to its mirror image we differentiate them by a plus/minus sign \cite{liang1994left}. The crossing number of knotoids can be defined analogously and a tabulation of knotoids was created by Goundaroulis, Dorier, and Stasiak \cite{goundaroulis2019systematic}.

\begin{figure}
     \centering
     \begin{subfigure}[b]{0.9\textwidth}
         \centering
         \includegraphics[width=\textwidth]{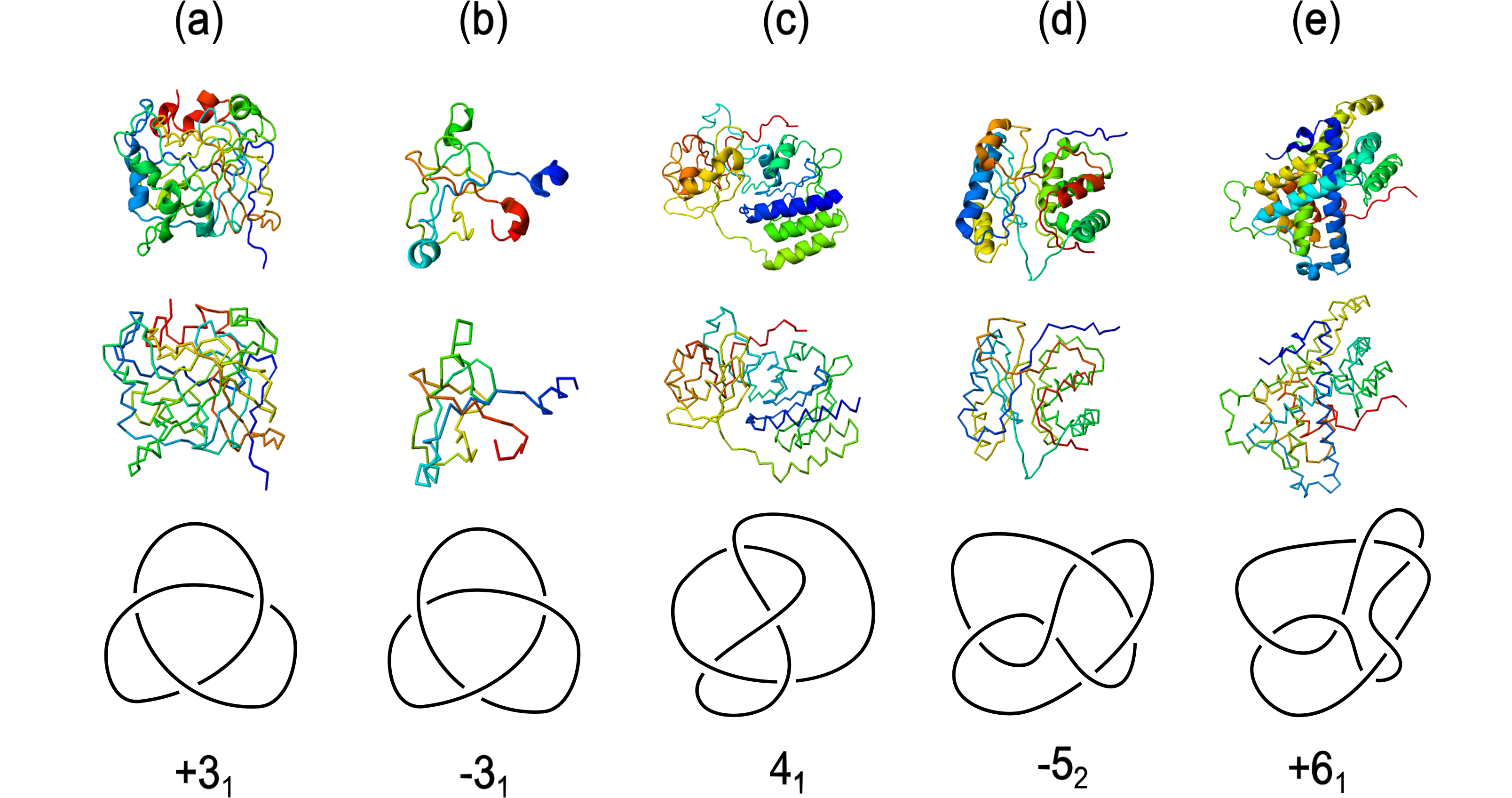}
         \label{fig:5crossings}
     \end{subfigure}
      \begin{subfigure}[b]{0.6\textwidth}
         \centering
         \includegraphics[width=\textwidth]{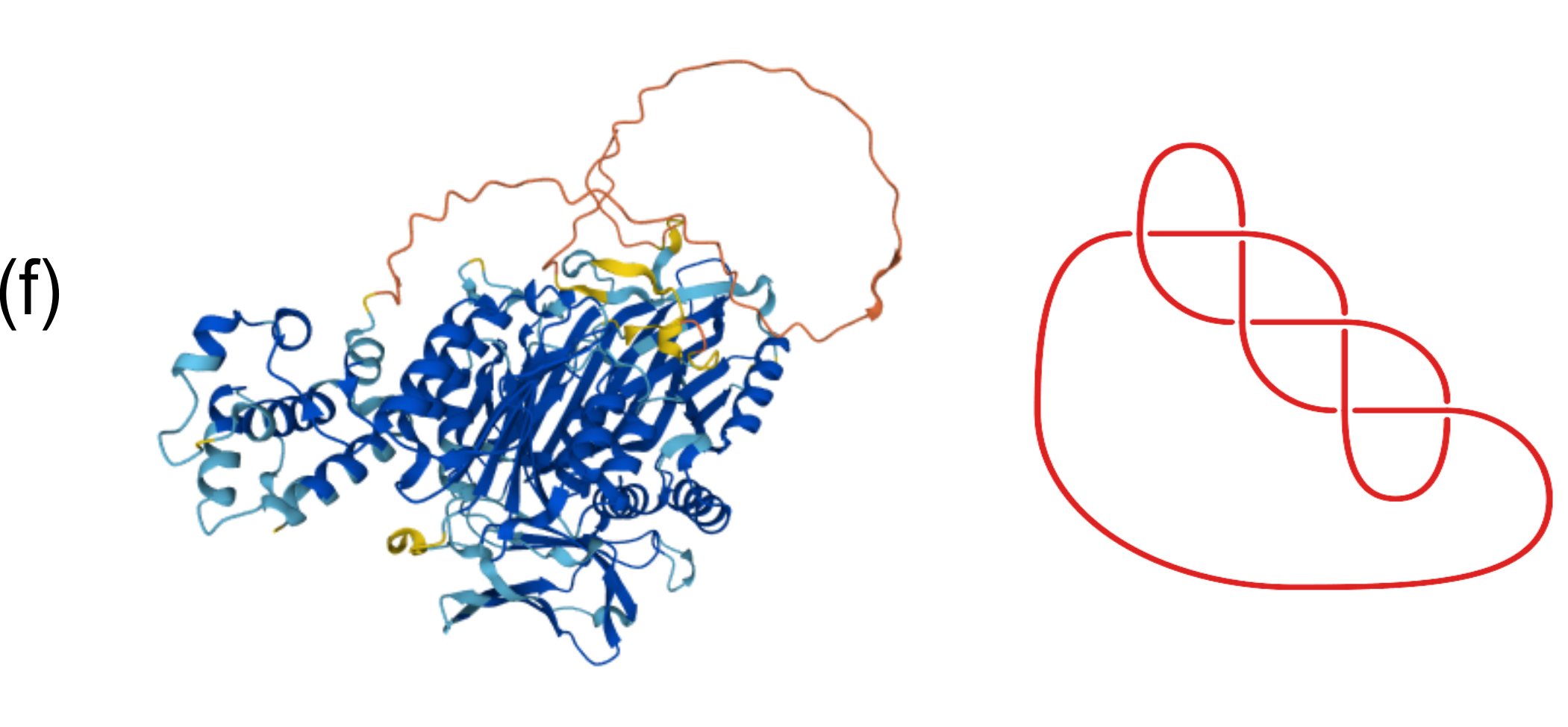}
         \label{fig:6crossings}
     \end{subfigure}

       \caption{Top panel: each ribbon diagram represents an example of a polypeptide chain containing one of the five knot types discovered in proteins. Middle panel: the reduce backbone representations of the examples from the top panel. Images in the top and middle panels were generated using KnotProt \cite{dabrowski2019knotprot}. Bottom panel: the five topological knot types obtained from the middle panel by performing probabilistic closure and deformations without cutting and regluing.
 (a) The alpha carbonic anhydrase from schistosoma mansoni containing a $+3_1$ knot, PDB code: 6QQM. (b) A human fully-assembled precatalytic spliceosome containing a $-3_1$ knot, PDB code: 6QX9. (c) Monomeric near-infrared fluorescent protein containing a $4_1$ knot, PDB code: 5VIK. (d) Solution structure of uchl1 s18y variant containing a $-5_2$ knot, PDB code 2LEN. (e) The $\alpha$-haloacid dehalogenase containing a $+6_1$ knot, PDB code: 4N2X. (f) The $6_3$ knot found in the backbone of the von Willebrand factor A domain containing 5A.}
\label{fig:proteinknots}
\end{figure}

Taylor \cite{taylor2007protein} proposed that a pathway for the formation of a protein knot begins with the protein taking the shape of a twisted bobby pin with a distinguished loop at the top. A terminus then moves towards such a loop, and eventually threads through it. Following Flapan et al., this folding mechanism will be referred to as the \textit{twisted hairpin pathway} \cite{flapan2019topological}. While it is possible that Taylor's theory gives the primary folding pathway for most protein knots, several computational and experimental results hint at the existence of other folding mechanisms \cite{flapan2019topological,bolinger2010stevedore}. 

We incorporate the notion of tangles to model knotted proteins.
An \textit{$n$-string tangle} is a 3-dimensional ball containing $n$ arcs whose endpoints lie in the spherical boundary of the ball. Tangles have had various applications in knot theory and biology. Notably, tangles have been used to model enzyme actions on DNA \cite{ernst1990calculus,stolz2017pathways,buck2007predicting}. This paper presents an application of tangle calculus to the study of protein knotting. Theories that have been proposed as to how knots form in proteins usually involve the process where a terminus passes through one or two loops \cite{taylor2007protein,flapan2019topological}. Such a phenomenon can be described more precisely in terms of tangles and tangle replacements. Loosely speaking, tangles are building blocks of entanglements that can be assembled to form a knot.
 We remark that our tangle model will provide information about the overall entanglement of protein knots, but not the geometry such as length and curvature. To determine the geometry of protein knots, the combination of our model with other techniques is needed. 

We set up equations involving 2-string and 3-string tangles which correspond to a topological description of knot folding that generalize the theory of Taylor \cite{taylor2007protein} and Flapan et al. \cite{flapan2019topological} respectively. The solutions to these equations give the knot types that may arise in proteins in the future. In the process, we also describe knotoid types that arise in intermediate stages of various folding pathways. When applicable, we pay close attention to the signs of tangle replacements in an attempt to understand why the right handed version of knots or a knotoids appear more in the PDB.

\subsection*{Organization}

The paper is organized as follows. In Section \ref{Section:Taylor}, we discuss Taylor's model and its relationship to replacements of 2-string tangles. In Section \ref{section:Flapan}, we discuss Flapan-He-Wong's model and its relationship to replacements of 3-string tangles and 3-braids. In Section \ref{section:orientedmoves}, we stay oriented tangle replacements that are relevant to protein folding. In Section \ref{section:moregeneral3-stringmodels}, we generalize the models from Section Section \ref{section:Flapan} even further to study possible folding pathways of the granny knot.

\subsection*{Acknowledgements}
Research conducted for this paper is supported by the Pacific Institute
for the Mathematical Sciences (PIMS). The research and findings may not reflect
those of the Institute.
\section{Taylor's twisted hairpin pathway modeled using 2-string tangles}\label{Section:Taylor}

Consider a ball around a portion of the knot so that the sphere boundary meets the knot in an even number of points. Such a ball containing a finite number of strings is a tangle. The process of removing a tangle from the knot, and replacing it with a new tangle is called a \textit{tangle replacement}. 
A terminus threading through a loop can be modeled as a tangle replacement 
involving a crossing change 
as shown in Fig. \ref{fig:replacements}.  Taylor's twisted hairpin theory is illustrated in Fig. \ref{fig:replacements}a where a loop is first formed and then a terminus passes through this loop.  In order to identify the knot type, the ends must be connected   (Fig. \ref{fig:replacements}b).   There are many ways to connect the termini to obtain the same knot type, but we can connect the endpoints in such a way that the tangle replacement is visible. 
The yellow balls in Fig. \ref{fig:replacements} are 2-string tangles.  They indicate the difference between a terminus passing through the loop and forming a knot versus a terminus not  passing through the loop.  Note that this difference is equivalent to changing a crossing:  which string crosses over the other string switches in a crossing change. 
 If the terminus does not go through the loop, we do not obtain a knotted protein as shown on the left of Fig. \ref{fig:replacements}b, while passing through the loop results in a knot as shown on the right of Fig. \ref{fig:replacements}b.
 We can generalize Taylor's theory by not requiring the twisted hairpin. In this case, the tangle outside of the crossing change is represented by an unknown tangle {\bf T} as shown in Fig. \ref{fig:replacements}c. Similar to Taylor's twisted hairpin pathway, we assume the protein is unknotted before the crossing change per the tangle equation on the right in Fig. \ref{fig:replacements}c.  After the crossing change, a knot {\bf K} is formed per the tangle equation on the left in Fig. \ref{fig:replacements}c. By solving the system of tangle equations in Fig. \ref{fig:replacements}c, we can predict what knot types are likely to occur in proteins.  

One solution to this system of equations is given in  Fig. \ref{fig:replacements}d. This solution is topologically equivalent to the configuration in Fig. \ref{fig:replacements}b. In topology, we can deform an object as long as we do not break/tear or glue.  In both b and d, the configuration on the left is unknotted, while that on the right is the knot $3_1$.  To see that the tangles are equivalent, one can first rotate the configurations in  Fig. \ref{fig:replacements}b by 45 degrees counterclockwise so that the yellow tangles agree with those in Fig. \ref{fig:replacements}d.  More work is needed to convert the black strings outside of the yellow tangle so that they are similar in both figures.  Note the black strings outside the yellow tangle are represented by the tangle {\bf T} in  Fig. \ref{fig:replacements}c. It is difficult to see the twisted hairpin in Fig. \ref{fig:replacements}d.  However, solutions to these tangle equations explain why only certain knots are observed in proteins.  They can also be used as a starting point to determine biologically relevent geometric configurations.

\begin{figure}[!ht]
 \centering
a.) \includegraphics[width=0.33\textwidth]{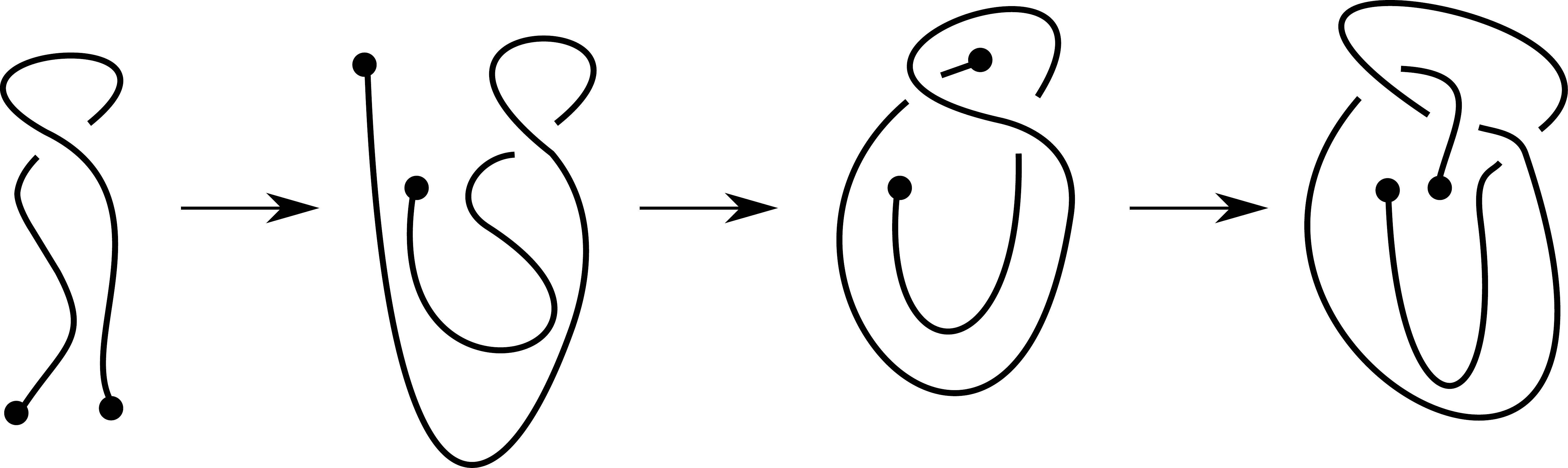}
c.) \includegraphics[width=0.5\textwidth]{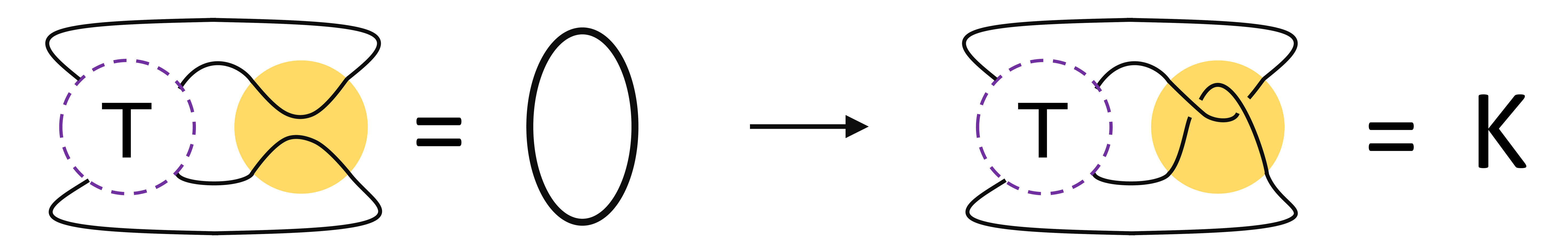}  
b.)\includegraphics[width=0.23\textwidth]{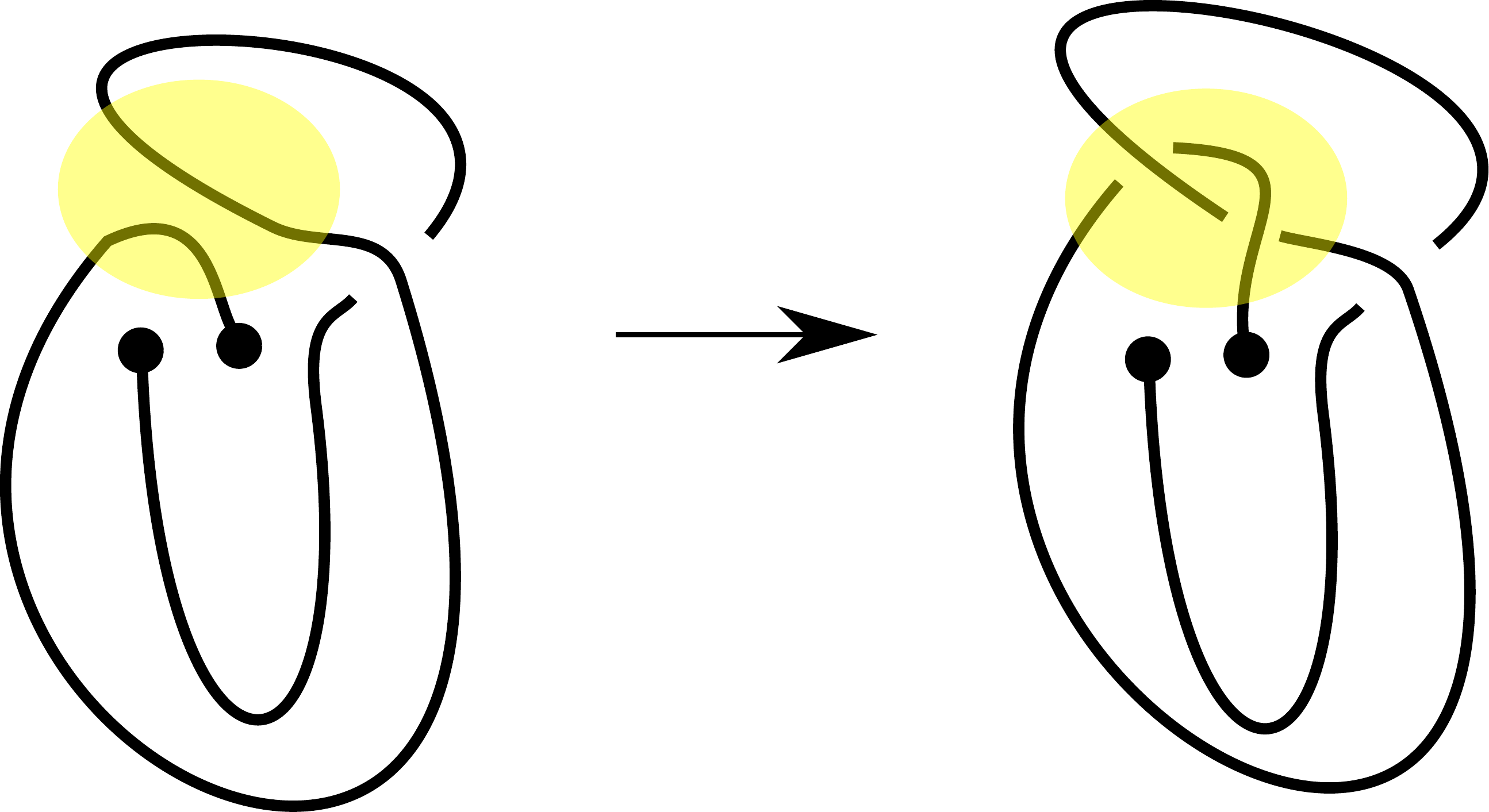} 
d.) \includegraphics[width=0.53\textwidth]{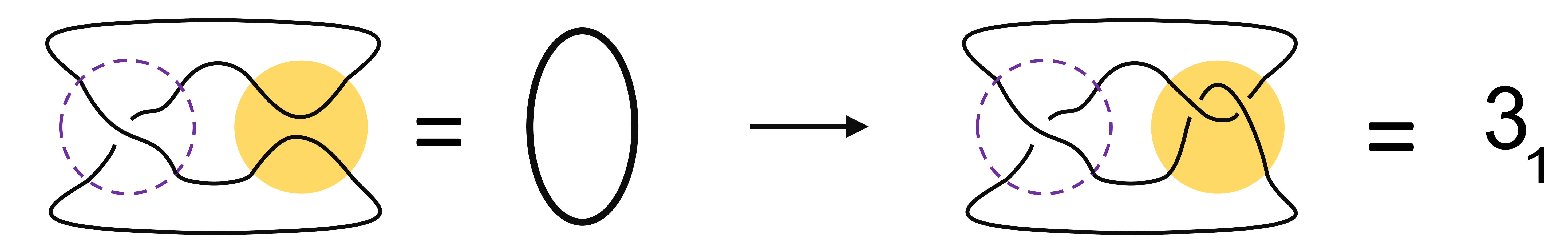}  

e.) \includegraphics[width=0.53\textwidth]{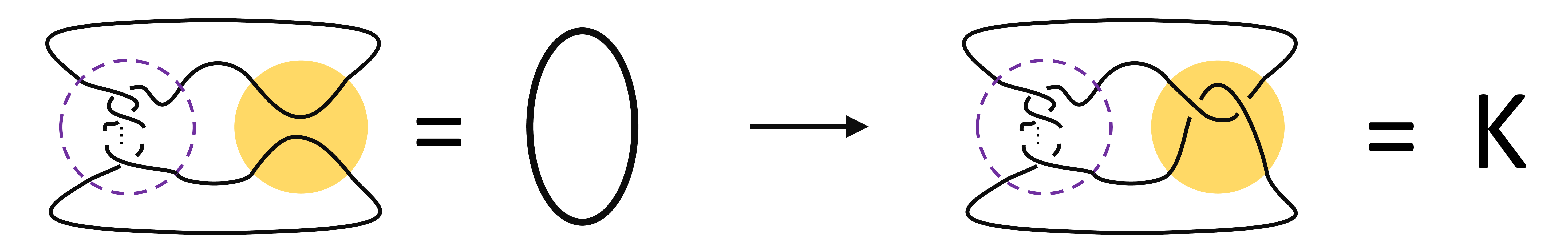} 
        \caption{
 Interpreting Taylor's twisted hairpin theory as a 2-string tangle replacement called a crossing change.  The protein backbone is represented by the black curve. 
(a)   In Taylor's twisted hairpin theory, a loop is first formed and then a terminus threads through the eye of the hairpin.
  (b) The ends of the protein are connected. The left-hand side represents the unknotted configuration where the terminus does not pass through the loop.  The right-hand side represents the knotted configuration obtained when the terminus passes through the loop.  The yellow balls containing two black segments indicate the difference between these two configurations.
(c)  System of tangle equations modeling Taylor's twisted hairpin theory.  Passing through the loop is modeled by the crossing change occuring in the  yellow tangles.  The configuration outside of this is modeled by the unknown tangle {\bf T}.   The knot type obtained by passing through the loop is indicated by {\bf K}.
(d)  A solution to the tangle equation in (c) where the tangle {\bf T} contains one crossing and the knot {\bf K} equals $+3_1$.  This solution is topologically equivalent to (b).
(e) Solutions to the tangle equation in (c) assuming the Taylor's twisted hairpin model.  The knot on the right belongs to the twist knot family.}
        \label{fig:replacements}
\end{figure}

\section*{Topology versus Geometry: Chirality implications}

  Proteins have a strong chiral bias towards right-handedness.  Many motifs including beta sheets and alpha helices twist in a right-handed fashion. Very few motifs contain left-handed twists. Thus, knots containing right-handed twists would be more expected. Assuming a right-handed chirality bias, the most likely knots to occur in proteins via Taylor's twisted hairpin model are the knots $+3_1$, $4_1$, and $-5_2$. The majority of protein knots are trefoil knots.  But the number of positive trefoils far outweigh the number of negative trefoils.  A chirality bias towards right-handedness would explain this as the  $+3_1$ contains three right-handed crossings while the $-3_1$ contains three left-handed crossings.  The knot $4_1$ is achiral (equivalent to its mirror image).  Its minimal diagram (the 2-dimensional projection with the minimum number of crossings) contains two right-handed crossings and two left-handed crossings.  But a non-minimal diagram of $4_1$ exists containing 5 right-handed crossings. The minimal diagram knot $-5_2$ contains three right-handed crossings and two left-handed crossings, but one can also draw this knot using six right-handed crossings.  The knots  $4_1$ and $-5_2$  are both twist knots containing right-handed crossings that are trapped via a 2 crossing left-handed clasp.  The $+6_1$ knot is the exception in that 4 out of 6 crossings in its minimal diagram are left-handed. An alternative folding pathway which includes right-handed twists has been proposed for the protein DehI, which contains this knot. This pathway is modeled below using 3-string tangles.

To capture this right-handed chirality bias, we focus on the tangle model where the zero crossing tangle (modeling the unknotted protein before passing through the loops) is replaced with a two crossing tangle containing right-handed twists.  For example, the positive trefoil is obtained by trapping a single right-handed crossing before passing through the loop while two left-handed crossings are needed to obtain the negative trefoil. Thus one should expect more positive than negative trefoils, as is the case.

Taylor's twisted hairpin model is shown as a tangle equation in Fig. \ref{fig:replacements}e. In this case, the tangle {\bf T} consists of vertical crossings.  The knot $K$ obtained by passing the endpoint through the loop must be a twist knot.
 We will first prove that if the tangle {\bf T} is simple, then Taylor's twisted hairpin model is the only solution to the system of tangle equations in Fig. \ref{fig:replacements}c.  
 The simplest tangles are rational tangles, which we define in the next section.  While we give the full mathematical definition, only the pictorial representations are needed to understand the results.




\subsection{Rational 2-string tangles}

In order to gain a better understanding of the significance of knotting on a molecular level, it is of interest for chemists to synthesize topologically complex molecular knots. An example of artificially engineered knotted proteins is a long polymer created by T. Sayre et al. \cite{sayre2011protein}. This protein is constructed by polymerizing the carbonic anhydrase II protein containing the $+3_1$ knot.

One successful strategy of creating molecular knots, which is very closely related to a mathematical definition of a rational tangle, was employed by Dietrich-Buchecker and Sauvage \cite{dietrich1989synthetic}, where they intertwined two molecular threads together around a central transition metal, resulting in horizontal twists  (see Fig. \ref{fig:sauvage}). After linking the ends of the ligands in the helicate together, the transition metals can be removed to obtain a trefoil knot. This process is strikingly reminiscent of a more general construction of a family of tangles familiar to knot theorists called \textit{rational tangles}.

\begin{figure}[!ht]
  \centering
    \includegraphics[width=0.5\textwidth]{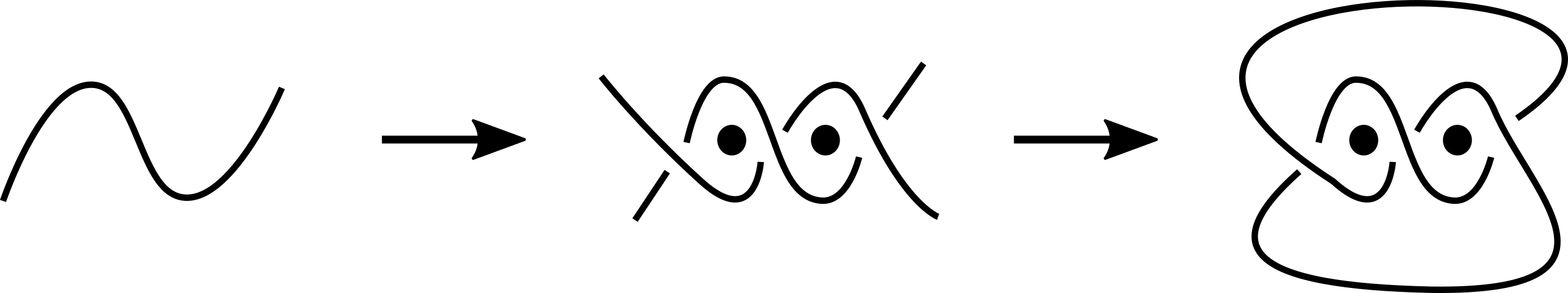}
        \caption{Using two central transition metals to synthesize a molecular trefoil knot.}
        \label{fig:sauvage}
\end{figure}

For each 2-string tangle, we can assign labels NE, NW, SW, and SE to the endpoints of the strings as shown in Fig. \ref{fig:rational} (a). A 2-string tangle is said to be \textit{rational} if it can be created by starting with a zero crossing tangle (see Fig. \ref{fig:rational}) (a),   and alternating between twisting the NE and SE endpoints around each other and twisting the SW and SE around each other.  We will also designate a plus or a minus sign to a half-twist depending upon whether the understrand at the crossing has a positive or a negative slope, respectively.
 With these conventions, we can represent each rational tangle as a vector $(a_1,...,a_n)$ where each entry is an integer corresponding to the number of signed twists of the NE and SE or SW and SE endpoints recorded in order. We additionally require that the last entry $a_n$ in the vector represents the half twists of the NE and SE endpoints even if that number is 0. For instance, the third step of Fig. \ref{fig:rational} (a) has 0 as the last entry because we end by twisting the SW and SE endpoints.   Thus since we start with the 0-tangle, $n$ will be odd and $a_i$ will correspond to horizontal twists if $i$ is odd and vertical twists if $i$ is negative.
Horizontal twists will be righthanded if $a_i >1$ and lefthanded if $a_i < 1$, while vertical twists  will be righthanded if $a_i <1$ and lefthanded if $a_i > 1$,
From a vector $(a_1,...,a_n)$ representing a rational tangle, we can form a fraction $\dfrac{p}{q}$ using the following correspondence:
    $\dfrac{p}{q}=a_n+\frac{1}{a_{n-1}+ \cdots \frac{1}{a_2+\frac{1}{a_1}}}$.
 John H. Conway proved a powerful classification result, which states that two rational tangles are equivalent if and only if their corresponding fractions are equal \cite{conway1970enumeration}. 

   We can create a knot from a tangle. The \textit{numerator closure} of a tangle $A$, denoted $N(A)$ is formed by connecting the NW and NE endpoints and the SW and SE endpoints by arcs as shown in Fig. \ref{fig:rational} (b). When a knot $N(A)$ is obtained as the numerator closure of a rational tangle $A = (a_1,...,a_n)$, we call $N(A)$ a \textit{rational knot} (also referred to as 2-bridge or 4-plat in mathematical literature).

\begin{figure}[!ht]
  \centering
    \includegraphics[width=0.7\textwidth]{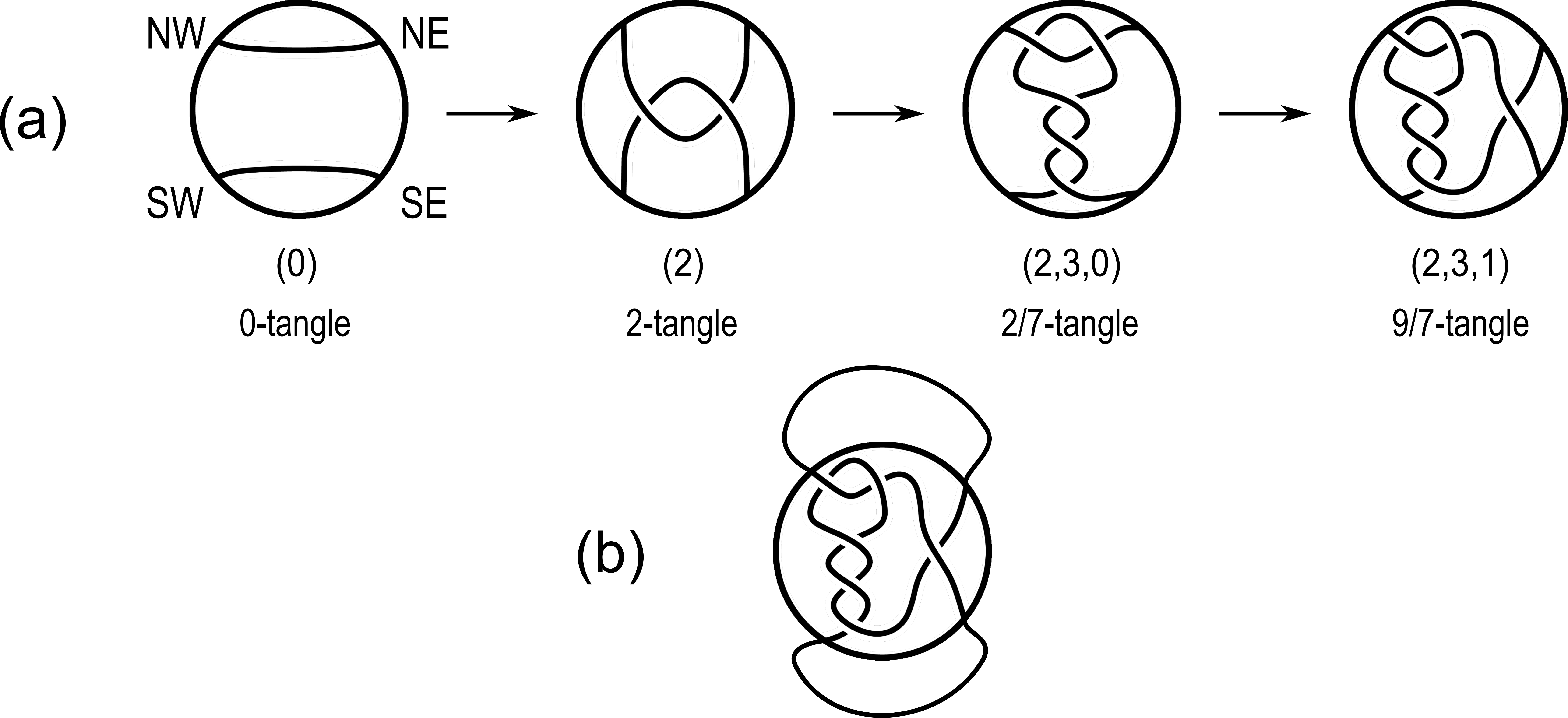}
        \caption{(a) The process of creating the 30/7-tangle. (b) The numerator closure of the 9/7 tangle.}
        \label{fig:rational}
\end{figure}

Rational tangle theory provides a strategic blueprint for constructing more complicated knots in the future. As a consequence of our tangle analysis, we will see that even if we would like to construct a rational knot by starting with a rational tangle, and letting one endpoint goes through a loop, then the resulting knot must be a twist knot, which is a very simple type of rational knot.

\subsection{Taylor's twisted hairpin pathway and 2-string tangles}

In Taylor's theory, he assumes that the unknotted protein takes the shape of a twisted hairpin \cite{taylor2007protein}. This configuration is a particular instance of the unknot in the form of the numerator closure of the sum of a simple rational tangle with the 0-tangle. We can generalize Taylor's theory by assuming that we start out with the sum of \textit{any} rational tangle with the 0-tangle whose closure is the unknot. We are interested in what sort of knots can result by replacing the 0-tangle with the $+2$ tangle because the replacement represents the terminal end passing through the loop. This comes down to solving a system of two tangle equations illustrated in Fig. \ref{fig:replacements}.

Twist knots are the only knot types discovered in proteins. This is clearly true if the protein assumes the form of a twisted bobby pin with a loop on top before a terminus threads through the loop. The solution to the equation in Fig. \ref{fig:replacements} shown in the following proposition implies that we will still only discover twist knots even if we start out with the sum of an arbitrary rational tangle with the 0-tangle whose closure is the unknot. 
\begin{res}\label{prop:2stringeq}
Suppose that $T$ is a rational tangle represented by $\dfrac{p}{q}$. Given the following system of tangle equations:
\begin{align}
N\left(T+0\right) &= N\left(\frac{1}{0}\right) \label{eq1}\\
N\left(T+2\right) &= K \label{eq2},
\end{align}
then $K$ is a twist knot and the tangle $T$ consists of vertical twists as shown in Fig. \ref{fig:replacements}e.
\end{res}
\begin{proof}
Equations \ref{eq1} and \ref{eq2} can be rewritten as 
\begin{align}
N\left(\frac{p}{q}\right) &= N\left(\frac{1}{0}\right) \label{eq3}\\
N\left(\frac{p}{q}+2\right) &= K \label{eq4}.
\end{align}

Due to Schubert's classification \cite{schubert1956knoten}, Equation \ref{eq3} implies that since $p = 1$. Substituting $p=1$ into \ref{eq4} gives

\begin{align*}
K = N\left(\frac{1}{q}+2\right).
\end{align*}
These are twist knots, which can be visualized as the numerator closure of the sum of two tangles, as shown in Fig. \ref{fig:proteinknots} (g), and the tangle $T$ is made up of vertical twists.
\end{proof}

Twist knots (see Fig. \ref{fig:proteinknots} (g)) were the only knot type that resulted from the model above. It assumes simplicity by using rational tangles. We will show in the next subsection that the requirement that $T$ has to be rational could be relaxed, but it would lead to a more complicated situation that might not be expected frequently in nature. For an example of a tangle that is not rational, the readers may consult Fig. \ref{fig:nontwist}.

\subsection{Non-twist knots in proteins}

Even though only twist knots have been identified on protein backbones, it is natural to ask how plausible it is for knotted proteins to contain more complicated knots. We can find all solutions to the tangle equations in Fig \ref{fig:replacements}c when the knot resulting from a single crossing change is a rational knot \cite{darcy2005solving, darcy2006topoice}. See the appendix or the TopoICE-X software for these solutions.  

The smallest crossing knot that has not been recorded in the PDB (even though it is predicted by the computer artificial intelligence system AlphaFold) is the knot $5_1$. Another possible reason why protein knots that are not twist knots have not been observed as frequently is the following. From Fig. \ref{fig:replacements}, any protein knot can be untied by switching only one crossing if we adhere to Taylor's theory. In particular, all twist knots satisfy this condition. Using techniques from geometric topology, one can show that it is not possible to transform the $+5_1$ knot into the unknot by switching only one crossing \cite{darcy1998applications}. In other words, there is no solution to the tangle equation in Fig. \ref{fig:replacements}c when $K$ is the knot $5_1$. Rather, one needs to perform at least \textit{two} crossing switches to untie the $+5_1$ knot. Therefore, $+5_1$ can never arise in nature via Taylor's theory.

The next 2 simplest knots that have not yet been found in the PDB are the six crossing knots  $6_2$ and $6_3$. Both of these knots can be obtained from the unknot via a single crossing change per Fig. \ref{fig:nontwist}. However note that the solution for $T$ when $K$ is one of these non-twist six crossing knots is rather complicated. For any knot $K$ that is not a twist knot, the solution for $T$ will be at least as complicated as those shown in Fig. \ref{fig:nontwist}. We remark that in this illustrations, instead of replacing the 0-tangle with the +2 tangle, we replace the $+1$ tangle with the $-1$ tangle. Readers can see that doing this new type of replacement accomplishes the crossing change task as well.

\begin{figure}[!ht]
  \centering
      \centering
     \begin{subfigure}[b]{0.3\textwidth}
         \centering
         \includegraphics[width=\textwidth]{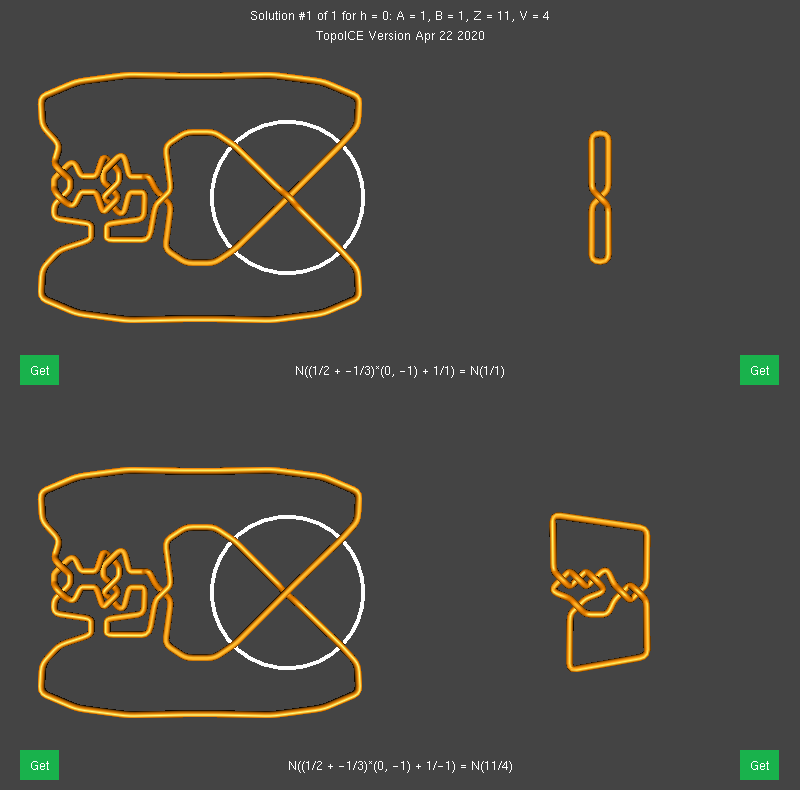}
         \caption{}
         \label{fig:62solution}
     \end{subfigure}\\
          \begin{subfigure}[b]{0.3\textwidth}
         \centering
         \includegraphics[width=\textwidth]{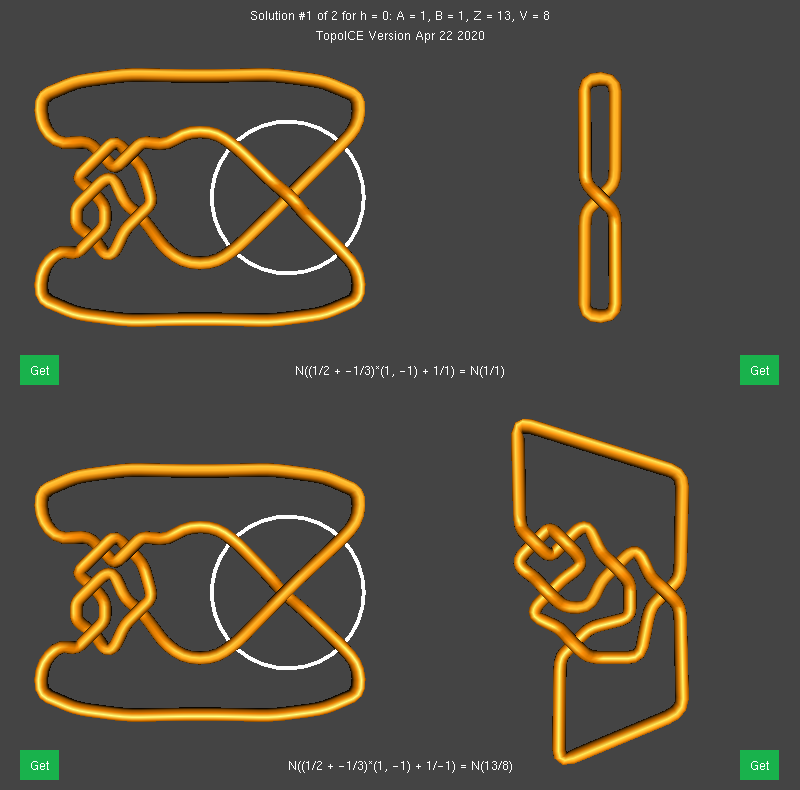}
         \caption{}
         \label{fig:63solution}
     \end{subfigure}
               \begin{subfigure}[b]{0.3\textwidth}
         \centering
         \includegraphics[width=\textwidth]{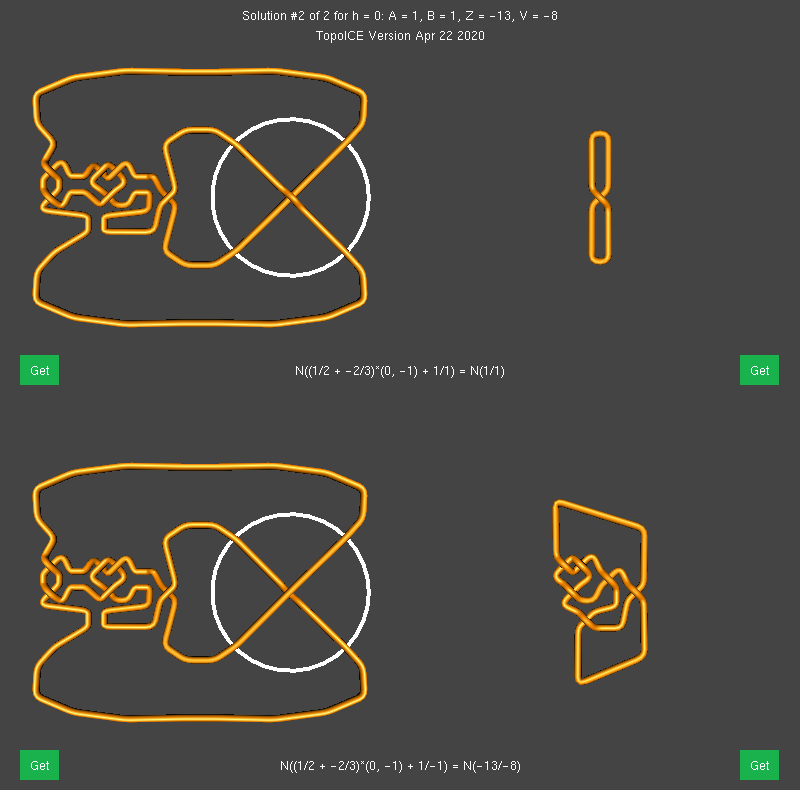}
         \caption{}
         \label{fig:63solution2}
     \end{subfigure}
        \caption{
(a) Obtaining the non-twist knot $6_2$ via a single crossing change. Subfigures (b) and (c) depict two solutions for the tangle $T$ in Fig. \ref{fig:replacements}c when $K$ is the non-twist knot $6_3$.
These illustration were created using the software TopoICE-R \cite{darcy2006topoice}.
}
\label{fig:nontwist}
\end{figure}

\subsection{Knotoids and generalized Taylor's model}
A \textit{knotoid diagram} is an immersion of a closed interval into the sphere, where each double point is decorated with an over-crossing/under-crossing information. A \textit{knotoid} is an equivalence class of knotoid diagrams modulo classical Reidemeister moves applied away from the endpoints. These Reidemeister moves are depicted in Fig. \ref{fig:Reid}. Knotoids have been successfully used by various authors to model knotted proteins. In particular, folding pathways of proteins can be analyzed as local changes in knotoid diagrams. For instance, Barbensi and Goundaroulis studied a local move where an open end of the protein is tucked under or over another strand \cite{barbensi2021f}. Such a move is called a \textit{forbidden move}, and they are depicted in Fig. \ref{fig:signedforbidden}. We remark that while a knotoid looks like a tangle, they are not the same object. For instance, the third picture the sequence in Fig. \ref{fig:replacements} (a) is trivial as a tangle, but it is nontrivial as a knotoid.

\begin{figure}[ht!]
\centering
\includegraphics[width=10cm]{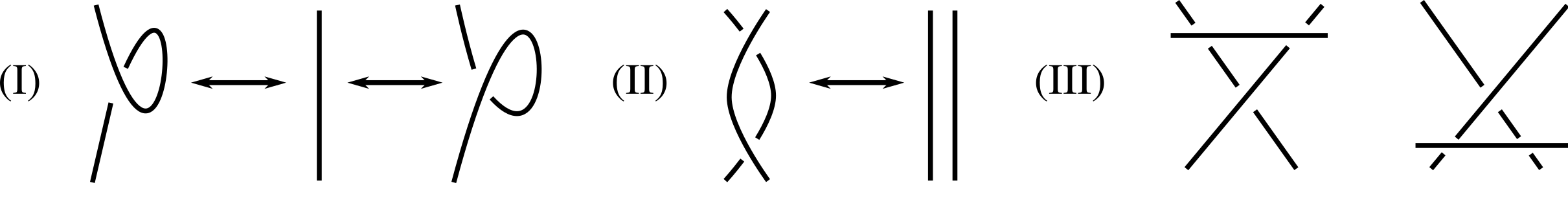}
\caption{Reidemeister moves.}\label{fig:Reid}
\end{figure}

\begin{figure}[ht!]
\centering
\includegraphics[width=5cm]{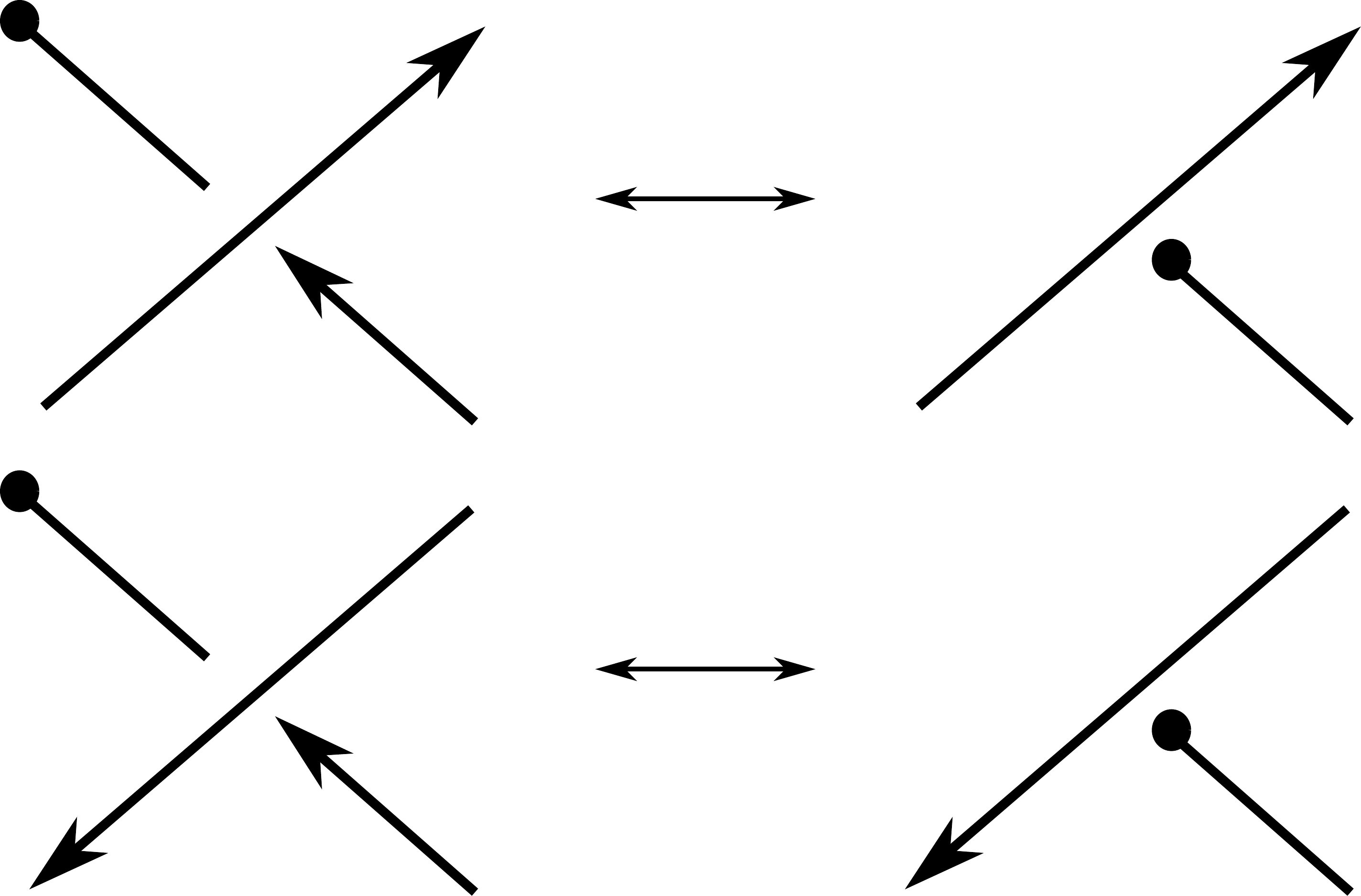}
\caption{(Top) A positive forbidden move. (Bottom) A negative forbidden move.}\label{fig:signedforbidden}
\end{figure}

Although non-twist knots have not been recorded in the PDB, they are predicted by AlphaFold. Thus, it may be useful to analyze more general tangle models. Generalizing rational tangles, a collection of rational tangles can be arranged into a necklace-like formation to form a new class of knots called \textit{Montesinos knots} and two arcs can be removed to obtain \textit{Montesinos tangles} as demonstrated in Fig. \ref{fig:Montesinos2}. In the result below, the tangle $S$ and $S'$ are tangles that differ by a \textit{2-move} as shown in Figure \ref{fig:2move}. Performing a 2-move achieves the loop threading action in Taylor's model.

\begin{figure}[ht!]
\centering
\includegraphics[width=10cm]{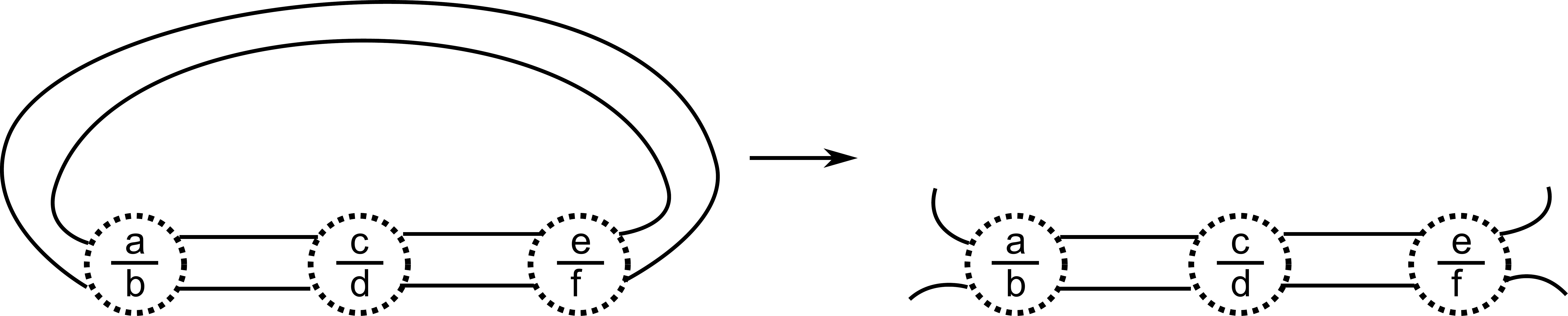}
\caption{At left, a Montesinos knot of length three with vector $(a/b,c/d,e/f)$. At right, some arcs are removed from a Montesinos knot to get a Montesinos tangle $(a/b,c/d,e/f)$. In general, a Montesinos knot of length $n$ admits a diagram that can be visualized as a necklace on $n$ beads, where each bead is a rational tangle.}\label{fig:Montesinos2}
\end{figure}

\begin{figure}[ht!]
\centering
\includegraphics[width=7cm]{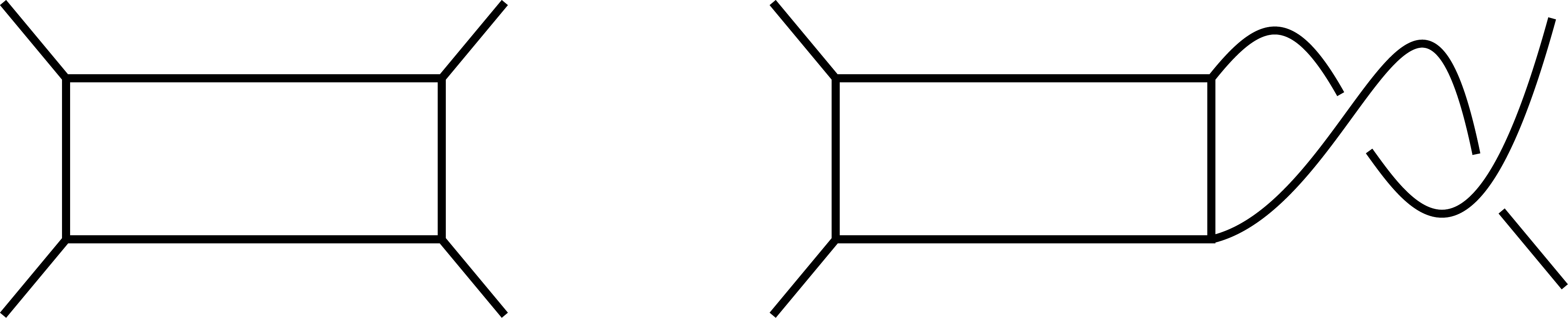}
\caption{At left, the tangle $S$ appearing in Result \ref{res:Mon}. At right, the tangle $S'$ appearing in Result \ref{res:Mon}. The boxes contain identical subtangles.}\label{fig:2move}
\end{figure}

\begin{res}
\label{res:Mon}
Suppose that $T$ is a Montesinos tangle. Given the following system of tangle equations:
\begin{align}
N\left(T+S\right) &= Unknot \label{eq2.1}\\
N\left(T+S'\right) &= K \label{eq2.2},
\end{align}
then $K$ is a 4-plat knot admitting a Montesinos diagram $(p_1/q_1,p_2/q_2,a)$, where $p_1q_2 + p_2q_1 \cong \pm 1 \mod q_1q_2$ and $a$ is an integer determined by $T$. Furthermore, suppose that $K$ has the bracket polynomial $V_K(A)$, then there is a nontrivial knotoid $k$ in the folding pathway, whose bracket polynomial is $V_k(A)=(A^{\pm 3}V_K(A)+A^{\mp 3})/(A^3+A^{-3})$.
\end{res}
\begin{proof}
This was pointed out as Theorem 4.9 by Nogueira and Salgueiro \cite{nogueira2022embeddings}. To summarize, since $T$ is Montesinos, the tangle $S$ must be integral. The knot $N(T+S)$ has an associated 3-manifold coming from a 2-fold branched cover, which is a Seifert fibered space $M$. The 3-manifold $M$ is also parametrized by a vector of rational numbers $(p_1/q_1,\cdots,p_n/q_n)$ where each of these rational numbers are extracted directly from $T$ and $S$. But since $N(T+S)$ is the unknot, its branched cover is the 3-sphere. This implies that $n=2, q=1$ and $p_1q_2 + p_2q_1 \cong \pm 1 \mod q_1q_2$ by the classification of Seifert fibered spaces. It is a standard fact that a Montesinos knot of length 3, where one of the parameters is integral, is a 4-plat knot.

One can verify that the bracket polynomial of $k$ coincides with the bracket polynomial of the virtual closure of $k$. Here, recall that the virtual closure is obtained by connecting up the open ends of the knotoid and declaring any crossings encountered by the closure arc virtual crossings. The polynomial $V_k(A)$ is obtainable via the skein relation proved in Theorem 1 of \cite{kamada2002virtualized}.
\end{proof}

\begin{figure}[ht!]
\centering
\includegraphics[width=7cm]{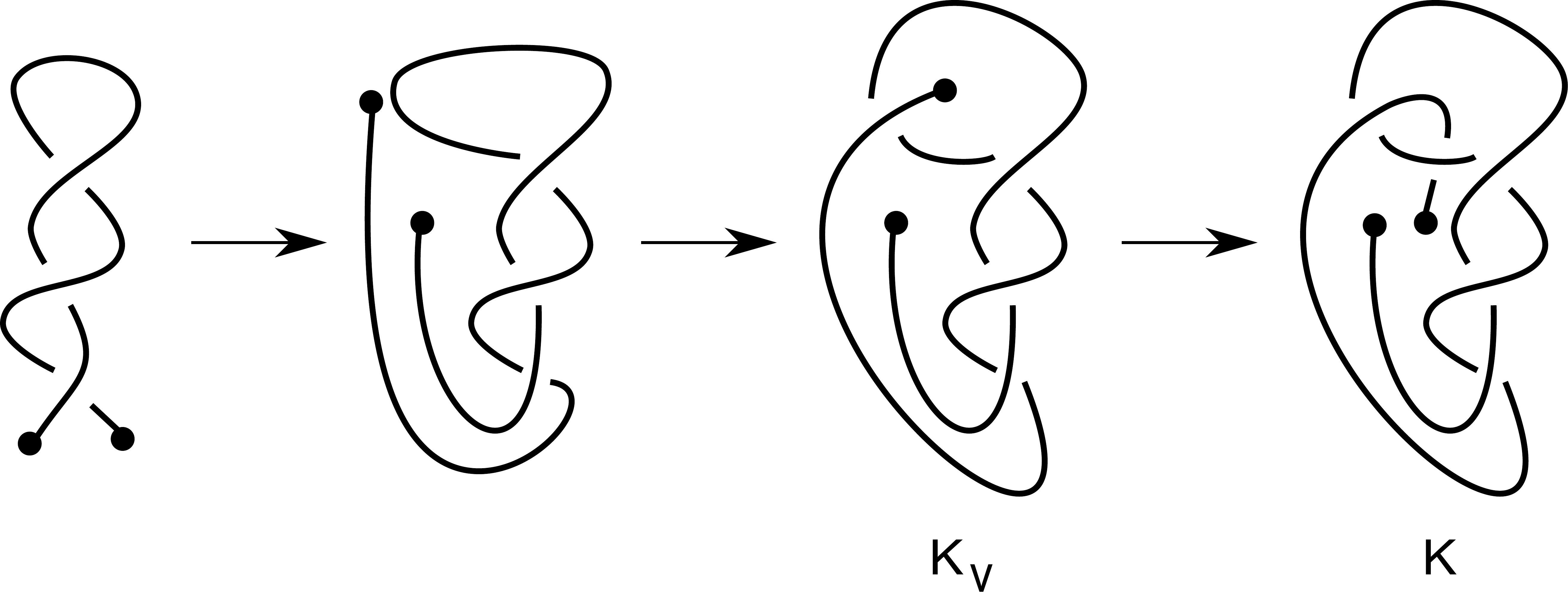}
\caption{If one treats each stage of Taylor's twisted hairpin pathway as a knot, then all the knots may be trivial except the final stage. Treating each stage as a knotoid, however, we always get another nontrivial knotted object $K_v$ in an intermediate stage.}\label{fig:virtualknotpath}
\end{figure}

The result above gives infinitely many knotoids with forbidden number one. Furthermore, since 4-plat knots are alternating, we can use Corollary 1.4 of Kamada \cite{kamada2004span} to say that it is impossible to perform Reidemeister moves to make the endpoints of the knotoid $K_v$ lie in the same region.
\section{Flapan-He-Wong's theory}\label{section:Flapan}
While it is possible that Taylor's theory gives the primary folding pathway for protein knots, several computational and experimental results hint at the existence of other folding mechanisms \cite{andersson2011effect,andersson2009untangling,lee2017entropic,lou2016knotted,zhang2016characterization,bolinger2010stevedore}. In particular, these studies show that a folding pathway may contain an intermediate stage where a nontrivial knot is created before the final threading of a terminus through a loop. Such a phenomenon never occurs in the twisted hairpin pathway as the twisted hairpin configuration is always topologically unknotted before a terminus goes through the eye of the hairpin.

The knotted DehI protein is a notable example of a knotted protein whose folding pathways may involve a knotted intermediate. Fig. \ref{fig:fold} illustrates the two parallel pathways to the $+6_1$ knot found by using molecular dynamics simulations with a coarse-grained G\={o} model of the folding of DehI. We remark that what we refer to as the green and the red loop were also called the B-loop and the S-loop respectively in B{\"o}linger et al. \cite{bolinger2010stevedore} In steps a and b, the green loop and the red loop are created. In step c, one more twist is added to the red loop, after which two parallel pathways are possible. In the first possibility, the green loop flips over the red loop and the light blue terminus. The light blue terminus then threads through the red loop yielding the $+6_1$ knot in step d. In the alternate pathway, the blue terminus threads through the red loop before the green loop flips over both the red loop and the blue terminus. Observe that the two pathways are distinct since the knot present in step c involving the light blue terminus is the $4_1$ knot while the knot present in step c involving the (darker) blue terminus is the unknot.

The simulations for DehI outlined above motivated Flapan et al. \cite{flapan2019topological} to consider more general pathways that involve loop-flipping. We now explain Flapan et al.'s topological descriptions of protein folding. In step c of Fig. \ref{fig:fold}, the red loop contains two positive twists and the green loop contains one negative twist. Flapan et al.'s relaxed these restrictions on the number of twists and allowed zero, one, or two twists of either sign in each of the loops. Next, notice that in step c of Fig. \ref{fig:fold}, the segment between the blue and the green passes behind the red arc. On the other hand, Flapan et al. allowed the segment between the blue and green segments to either pass behind or in front of the red arc. After this point, two parallel pathways are created depending on whether the green loop flipping or the threading of the blue terminus through the red loop happens first, just like in B{\"o}linger et al.'s simulation. However, Flapan et al. noticed that there are four distinct ways that the blue terminus can go through the green and the red loop, which they referred to as LL, RR, LR, and RL. These four distinct ways are illustrated in Fig. 6 of \cite{flapan2019topological}.

\begin{figure}[ht!]
\centering
\includegraphics[width=8cm]{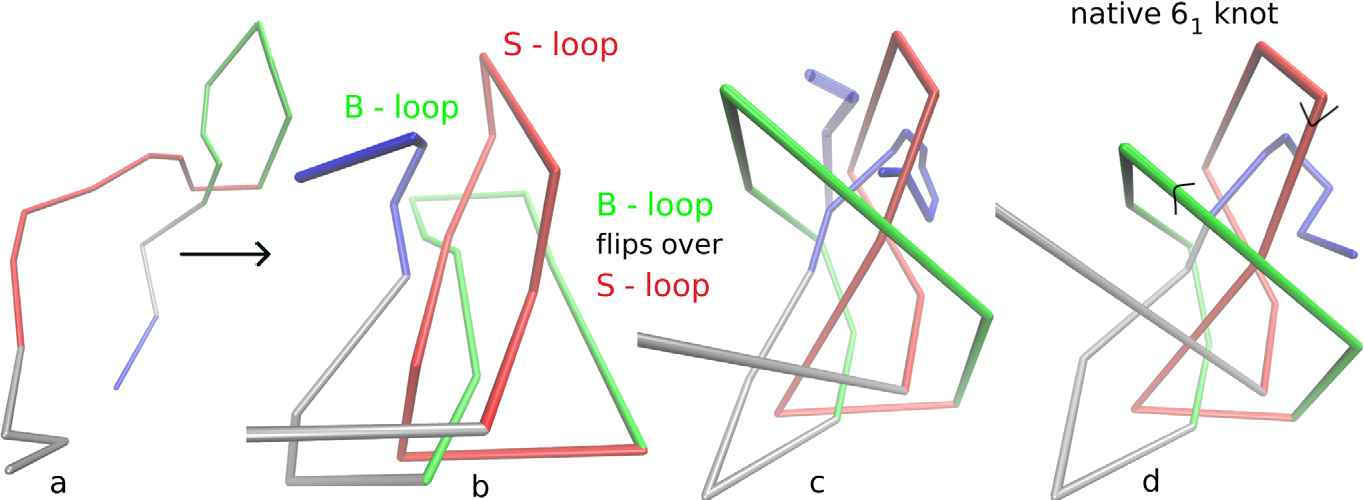}
\caption{The folding pathways of DehI. In step a, the green B-loop is created. In step b, the red S-loop is created. There are two possibilities for step c and d depending on the order of the green B-loop flipping and the threading of the blue terminus through the red S-loop. To differentiate between the two pathways, the authors color the blue terminus in a lighter shade for the first pathway. In step c of the first pathway, one more twist is added to the red S-loop. Then, the green B-loop flips over both the red S-loop and the light blue terminus. At this point, performing the probabilistic closure results in the $4_1$ knot. In step d, the light blue terminus threads through the red S-loop resulting in the $+6_1$ knot. In step c of the second pathway, one more twist is added to the red S-loop as well. Then, the blue terminus threads through the red S-loop. The green B-loop then flips over the blue terminus and the red S-loop in step d, resulting in the $+6_1$ knot. If we perform the probabilistic closure in the second pathway after right after step c, we get the unknot. Image modified from B{\"o}linger et al.'s paper \cite{bolinger2010stevedore}}\label{fig:fold}
\end{figure}
\subsection{Introducing 3-braids}
A \textit{3-braid} is defined
to be a set of three strings where an endpoint of each string is attached to a left vertical bar, and the other endpoint is attached to
a right vertical bar with the additional requirement that as we move along each string from the left bar to the right bar, the string
heads rightward. One can easily see that a 3-braid can be treated as a 3-string tangle if one treats the vertical bars as subsets of
the sphere boundary of a ball. Any 3-braid can be put into one of the forms illustrated in Fig. \ref{fig:convention} depending on whether the
number of the boxes is even or odd. In Fig. \ref{fig:convention}, the vertical bars for the braids are shown in blue, but in subsequent depictions
of 3-braids, we will omit the vertical bars. Each box represents some number of twists considered with signs (see Fig. \ref{fig:braidtwist}). When the box contains the top two strings, a positive twist corresponds to a crossing where the undercrossing has positive slope. On the other hand, if the box contains the bottom two strings, a positive twist corresponds to a crossing where
the undercrossing has negative slope. A sequence of integers $a_1,...,a_n$ determines a 3-braid and
we will use the notation $\mathcal{T}(a_1,...,a_n)$ to represent the 3-braid associated to such integers.

A braid is said to be \textit{alternating} if as one traverses the braid from left to right, the crossings alternate between over and
under. Let $E$ denote the special 3-braid $\mathcal{T} (1,-1,1)$ and $-E$ denotes the braid $\mathcal{T}(-1,1,-1)$. By $\mathcal{T}(a_1,...,a_n)+kE$, we mean
that the braid $\mathcal{T}(a_1,...,a_n)$ is followed by $k$ copies of the braid $E$ if $k$ is positive or the braid $-E$ if $k$ is negative. Cabrera-Ibarra
proved that any 3-braid can be put in standard form, which means that any generic braid  $\mathcal{T}(a_1,...,a_n)$ is equivalent to a braid
of the form $\mathcal{T}(b_1,...,b_n)+kE$ where $\mathcal{T}(b_1,...,b_n)$ is alternating and $k$ is an integer.
\begin{figure}[!ht]
  \centering
    \includegraphics[width=0.4\textwidth]{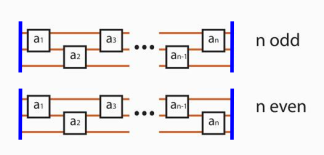}
        \caption{Any three braid has one of the following diagrams.}
        \label{fig:convention}
\end{figure}

\begin{figure}[!ht]
  \centering
    \includegraphics[width=0.6\textwidth]{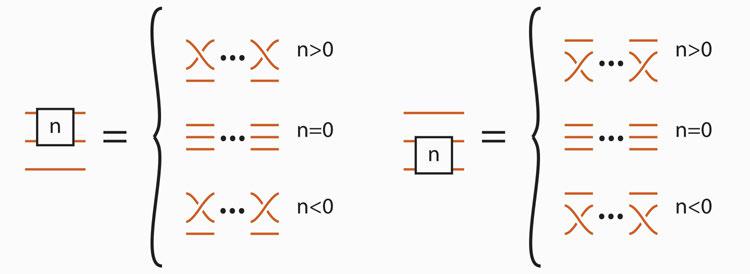}
        \caption{Any 3-braid can be obtain by stacking some number of these 3-braids together from left to right.}
        \label{fig:braidtwist}
\end{figure}
We now discuss lemmas, and theorems that are necessary for the proof of our first 3-braid model.
\begin{lem} (H. Cabrera-Ibarra \cite{cabrera2003classification}) \label{lem:uniquebraid}
For every 3-braid $B$ there exists a unique alternating diagram $AD=\mathcal{T}(a_1,...,a_n)$ with $a_i\cdot a_j \geq 0$ for all $i,j\in \{1,...,n \}$, and $a_i \neq 0$ for all $i>1$ and a unique integer, $k \in \mathbb{Z}$, such that $B = AD + kE$.
\end{lem}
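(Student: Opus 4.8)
The plan is to treat $B$ as an element of the braid group $B_3=\langle \sigma_1,\sigma_2 \mid \sigma_1\sigma_2\sigma_1=\sigma_2\sigma_1\sigma_2\rangle$, translating the notation $\mathcal{T}(a_1,\dots,a_n)$ into a word in which odd-indexed boxes are powers of the top generator and even-indexed boxes are powers of the bottom generator. Under this dictionary each box is a single twist region and is therefore internally alternating; a short computation identifies $E=\mathcal{T}(1,-1,1)$ with the half-twist (Garside) element $\Delta=\sigma_1\sigma_2\sigma_1$ (up to the sign conventions of Fig.~\ref{fig:braidtwist}); and the stated condition $a_i a_j\ge 0$ for all $i,j$ together with $a_i\neq 0$ for $i>1$ is precisely the condition that the diagram $\mathcal{T}(a_1,\dots,a_n)$ is alternating and reduced. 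Thus the statement to prove becomes: every $3$-braid has a unique expression $B=AD\cdot \Delta^{k}$ with $AD$ a reduced alternating braid and $k\in\Z$.

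For \emph{existence} I would run a normalization algorithm on the integer sequence. First, merge neighboring boxes of the same type and cancel opposite crossings so that no internal zero entries remain; this uses only $\sigma_i^{p}\sigma_i^{q}=\sigma_i^{p+q}$. The remaining obstruction to being alternating is a \emph{sign change}, i.e. consecutive nonzero entries of opposite sign. At each such junction I would apply the braid relation in the form that rewrites a locally non-alternating triple as an alternating triple with one factor of $E$ (or $-E$) split off, and then use the relation $\Delta\sigma_1=\sigma_2\Delta$ (together with centrality of $\Delta^2$) to commute the extracted half-twists to the right. Taking the number of sign changes, with ties broken by total crossing number, as a complexity measure, each step strictly decreases it, so the procedure terminates in a word $AD+kE$ with $AD$ reduced alternating.

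For \emph{uniqueness}, suppose $AD_1+k_1E=AD_2+k_2E$ in $B_3$; the idea is to separate the two pieces of data with invariants. The integer $k$ is recovered from the Garside infimum (equivalently, the largest power of $\Delta$ left-dividing $B$), once one checks that a reduced alternating $AD$ is itself not left-divisible by $\Delta$; this forces $k_1=k_2$ and hence $AD_1=AD_2$ as braids. It then remains to see that a reduced alternating $3$-braid determines its sequence $(a_1,\dots,a_n)$ uniquely. I would obtain this from the rigidity of reduced alternating diagrams: in $B_3$ the available flypes fix the twist-region decomposition, so two reduced alternating words for the same braid have identical entries. Alternatively, the entries can be read off as the sign-definite, zero-free continued-fraction digits of the associated tangle fraction, which are unique by Conway's theorem invoked above.

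I expect the \textbf{main obstacle} to be the uniqueness of the alternating part, not existence. Splitting off $k$ is clean once $E$ is identified with $\Delta$, but showing that distinct reduced alternating sequences never represent the same $3$-braid requires a genuine completeness statement — a flype/Tait-type rigidity for alternating $3$-braids, or an explicit injective fraction invariant — and this is exactly the content that cannot be extracted from diagram manipulation alone. This is the step where I would lean most heavily on Cabrera-Ibarra's tangle-fraction machinery, or equivalently on the uniqueness of the Garside normal form in $B_3$.
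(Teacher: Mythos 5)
You should first note that the paper contains no proof of this lemma at all: it is quoted from Cabrera-Ibarra \cite{cabrera2003classification} and used as a black box (only Theorems \ref{thm:relatingbraidwithbridge} and \ref{thm:braidclosureofknote} are reproved in the paper). So your proposal can only be measured against the cited source, not against anything internal to the paper. Your setup is correct: with the paper's sign conventions the dictionary $\mathcal{T}(a_1,\dots,a_n)\mapsto \sigma_1^{a_1}\sigma_2^{-a_2}\sigma_1^{a_3}\cdots$ does identify $E=\mathcal{T}(1,-1,1)$ with the half-twist $\Delta=\sigma_1\sigma_2\sigma_1$, and your existence argument is sound in outline --- it is exactly the flype-move normalization recorded as Lemma \ref{lem:flypedoesn'tchangebraid}: each flype removes one sign change at the cost of splitting off one factor of $\pm E$, so induction on the number of sign changes terminates.

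The genuine gap is in uniqueness, precisely where you placed the weight, and the Garside argument you propose does not close it. You want to recover $k$ as the Garside infimum of $B$, ``once one checks that a reduced alternating $AD$ is not left-divisible by $\Delta$.'' But non-divisibility only gives $\inf(AD)\le 0$, whereas your argument needs $\inf(AD)=0$ for \emph{every} reduced alternating $AD$: since $\inf(AD\,\Delta^{k})=\inf(AD)+k$, the conclusion $k_1=k_2$ requires all reduced alternating braids to share the same infimum, and they do not. Under the dictionary, an alternating word with positive entries is a mixed word in $\sigma_1$ and $\sigma_2^{-1}$, hence in general not a positive braid: $\mathcal{T}(1)=\sigma_1$ has infimum $0$, while $\mathcal{T}(1,1,1)=\sigma_1\sigma_2^{-1}\sigma_1=\Delta^{-1}\sigma_2^{2}\sigma_1^{2}$ has infimum $-1$, and both satisfy the lemma's hypotheses. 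So from $AD_1+k_1E=AD_2+k_2E$ you can only deduce $\inf(AD_1)+k_1=\inf(AD_2)+k_2$, which does not force $k_1=k_2$. Your fallback routes are also not usable as stated: the alternating form here is \emph{not} the Garside normal form (relating the two is exactly the missing work, so ``equivalently, uniqueness of the Garside normal form'' is not an equivalence); Conway's fraction theorem classifies rational 2-string tangles, not 3-braids; and Tait/flype rigidity of alternating diagrams is essentially the statement being proved, so invoking it is circular. A workable replacement is to kill the center and use the reduced Burau/$SL(2,\mathbb{Z})$ representation: a reduced alternating word maps to a product of same-sign elementary continued-fraction matrices and $\Delta$ maps to an element of order four, so uniqueness of same-sign continued-fraction expansions recovers the entries and $k$ modulo $4$, after which the exponent-sum homomorphism to $\mathbb{Z}$ recovers $k$ exactly. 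That argument, or Cabrera-Ibarra's original induction, is the content your sketch is missing.
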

A 3-braid is said to be in \textit{standard form} if it has the form described in Lemma \ref{lem:uniquebraid}. A \textit{flype move} is the transformation of a 3-braid $\mathcal{T}(a_1,...,a_r,a_{r+1},...,a_n)$ either into $\mathcal{T}(a_1,...,a_{r}+1,-1,1-a_{r+1},-a_{r+2}...,-a_n)-E$ or into  $\mathcal{T}(a_1,...,a_{r}-1,1,-1-a_{r+1},-a_{r+2},...,-a_n)+E$.
\begin{lem}\label{lem:flypedoesn'tchangebraid}
(H. Cabrera-Ibarra \cite{cabrera2003classification}) A flype move does not change the braid type. More precisely, if $n \in \mathbb{N}$ and $r \in \{1,...,n\}$, then 
\begin{align*}
\mathcal{T}(a_1,...,a_r,a_{r+1},...,a_n)&=\mathcal{T}(a_1,...,a_{r}+1,-1,1-a_{r+1},-a_{r+2}...,-a_n)-E\\
&=\mathcal{T}(a_1,...,a_{r}-1,1,-1-a_{r+1},-a_{r+2},...,-a_n)+E.
\end{align*}
\end{lem}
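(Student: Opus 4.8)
The plan is to reduce the geometric flype move to an algebraic identity in the braid group $B_3=\langle\sigma_1,\sigma_2\rangle$, where $\sigma_1$ twists the top two strands and $\sigma_2$ the bottom two. First I would translate the notation $\mathcal{T}(a_1,\dots,a_n)$ into a word in $\sigma_1^{\pm1},\sigma_2^{\pm1}$: odd-indexed entries contribute powers of $\sigma_1$ and even-indexed entries contribute powers of $\sigma_2$, with the sign on each power dictated by the slope convention stated just before the lemma. Under this dictionary one checks that $E=\mathcal{T}(1,-1,1)$ is the positive half-twist $\Delta=\sigma_1\sigma_2\sigma_1=\sigma_2\sigma_1\sigma_2$, so that $-E=\Delta^{-1}$; this is also what makes the standard form $AD+kE$ a genuine normal form, since $kE=\Delta^{k}$ just records half-twists. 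The decisive structural fact is that $\Delta$ implements the generator swap, $\Delta\sigma_1\Delta^{-1}=\sigma_2$ and $\Delta\sigma_2\Delta^{-1}=\sigma_1$. Geometrically this is precisely the $180^\circ$ rotation a flype performs, which interchanges the top and bottom twist regions.

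With this setup the claimed equalities become word identities, and I would prove them locally. Since the first $r-1$ entries are untouched, it suffices to verify the transformation on the tail $\mathcal{T}(a_r,a_{r+1},\dots,a_n)$. I would show that rotating the subtangle from position $r+1$ onward by $\pi$ reverses every subsequent crossing, accounting for the negations $a_{r+2}\mapsto -a_{r+2},\dots,a_n\mapsto -a_n$, while the swap of top and bottom caused by the rotation is absorbed by conjugating with $\Delta$, leaving a leftover correction of $\mp E=\Delta^{\mp1}$. The adjustments $a_r\mapsto a_r+1$ with an inserted $-1$, together with $a_{r+1}\mapsto 1-a_{r+1}$, are exactly the crossings created and cancelled at the pivot of the rotation; verifying this comes down to a single application of the braid relation $\sigma_1\sigma_2\sigma_1=\sigma_2\sigma_1\sigma_2$ at the hinge, after which the word collapses to the left-hand side.

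The two displayed lines correspond to the two sign choices for the pivot crossing: the first with a $-E$ correction, the second with $+E$. I would obtain the second line from the first by the mirror symmetry $\sigma_i\mapsto\sigma_i^{-1}$, $\Delta\mapsto\Delta^{-1}$, which negates every entry $a_j\mapsto -a_j$; one checks that this symmetry carries the first identity exactly onto the second (after relabeling $b_j=-a_j$, the entry $a_r+1$ becomes $b_r-1$, the inserted $-1$ becomes $+1$, and $-E$ becomes $+E$). Hence all three expressions represent the same element of $B_3$, and the braid type is unchanged.

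I expect the main obstacle to be the sign bookkeeping rather than any deep difficulty. The slope convention flips the sign of the generator on every even-indexed entry, while the $\Delta$-conjugation swaps $\sigma_1\leftrightarrow\sigma_2$ and therefore swaps which entries count as even and odd in the rotated tail; keeping these two sign effects consistent, so that the rotated tail lands exactly on $\mathcal{T}(\dots,-a_{r+2},\dots,-a_n)$ with the correct residual $\mp E$, is the delicate part. The algebraic core, namely the braid relation at the pivot, is short; the care lies in aligning it faithfully with the $\mathcal{T}$-indexing conventions.
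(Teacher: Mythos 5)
The paper itself contains no proof of this lemma: it is imported verbatim from Cabrera-Ibarra \cite{cabrera2003classification}, and only Theorems \ref{thm:relatingbraidwithbridge} and \ref{thm:braidclosureofknote} are reproved in the text. So your proposal is not paralleling an in-paper argument; it is supplying one. Its skeleton is the right one. Under the paper's slope conventions, the dictionary sending top boxes to $\sigma_1^{a_i}$ and bottom boxes to $\sigma_2^{-a_i}$ does make $E=\mathcal{T}(1,-1,1)$ the half-twist $\Delta=\sigma_1\sigma_2\sigma_1$; conjugation by $\Delta$ is the generator swap, which is exactly the combined effect of negating the tail entries and shifting their positions by one box; centrality of $\Delta^2$ (which you use implicitly) lets you slide the residual $\Delta^{\mp1}$ from the right-hand end past the rotated tail to the pivot; and the pivot identity is a single use of the braid relation. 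Your mirror-symmetry derivation of the second displayed line from the first (relabel $b_j=-a_j$) is also correct as far as it goes.

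The genuine gap is precisely the item you postponed as ``sign bookkeeping,'' and it is substantive rather than cosmetic: the sign of the leftover correction depends on the parity of $r$, i.e.\ on whether the pivot box at position $r$ is a top box or a bottom box, whereas your plan (like the displayed statement) assigns $-E$ to the first line and $+E$ to the second uniformly in $r$. Concretely, with the dictionary above, the first line at a top-box pivot reduces to the true identity $\sigma_1^{a_r+1}\sigma_2\sigma_1^{1-a_{r+1}}\Delta^{-1}=\sigma_1^{a_r}\sigma_2^{-a_{r+1}}$; but at a bottom-box pivot it reduces to $\sigma_2^{-a_r-1}\sigma_1^{-1}\sigma_2^{a_{r+1}-1}\Delta^{-1}=\sigma_2^{-a_r}\sigma_1^{a_{r+1}}$, which is false: the left side equals $\sigma_2^{-a_r}\sigma_1^{a_{r+1}}\Delta^{-2}$, off by the central full twist, a nontrivial element of $B_3$ and a nontrivial tangle. (Test case: $\mathcal{T}(a_1,a_2,a_3)$ with $r=2$; the insertion $(a_2+1,-1,1-a_3)$ must be followed by $+E$, not $-E$.) Your mirror symmetry $\sigma_i\mapsto\sigma_i^{-1}$ only exchanges the two lines at a fixed $r$, so it cannot repair this parity dependence. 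A complete proof along your lines must treat the two parities of $r$ separately, recording that the roles of $+E$ and $-E$ swap at bottom-box pivots, and then reconcile that with the figure conventions of this paper and of Cabrera-Ibarra's original (where the statement is presumably calibrated so that the displayed signs are literally correct). As the proposal stands, the step ``after which the word collapses to the left-hand side'' fails for half the values of $r$.
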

If $T$ is a rational tangle, then $N(T)$ is a well-studied families of links called \textit{four-plat links}. A key ingredient in the proof of our main theorem is the fact that a four-plat link can also be obtained as a certain ``closure'' of a 3-braid. Given a 3-braid $B$, the \textit{four-plat closure} $\mathcal{A}(B)$ of $B$ is a four-plat link obtained by joining endpoints of $B$ as
shown in Fig. \ref{fig:2bridgeclosure}. 

\begin{figure}[!ht]
  \centering
    \includegraphics[width=0.5\textwidth]{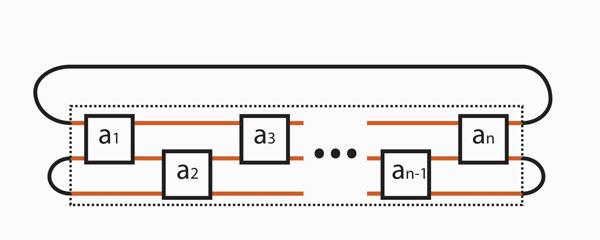}
        \caption{Performing the four-plat closure operation on a 3-braid gives a four-plat link.}
        \label{fig:2bridgeclosure}
\end{figure}
\begin{lem}\label{lem:propertyof2briddgeclosure}
(H. Cabrera-Ibarra \cite{cabrera2010braid}) The four-plat closure of a braid satisfies:

1. For every braid $B$ and every integer $k \in \mathbb{Z}, \mathcal{A}(B + 2kE) = \mathcal{A}(B).$ 

2. Given $a_2, ..., a_n \in \mathbb{Z}, \mathcal{A}(\mathcal{T} (0, a_2, ..., a_n)) = \mathcal{A}(\mathcal{T} (a_3, ..., a_n))$

\end{lem}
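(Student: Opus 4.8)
The plan is to prove both identities by direct isotopy of the closed diagrams, reading off the closure of Fig.~\ref{fig:2bridgeclosure} and using the flype relations of Lemma~\ref{lem:flypedoesn'tchangebraid} as an algebraic bookkeeping tool. The unifying principle is that each identity isolates a block of the braid which, once the plat arcs are attached, lies on a trivially capped pair of strands and can therefore be removed by Reidemeister moves; the two parts differ only in which block and which closure arc are involved.

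For the first identity I would identify $E=\mathcal{T}(1,-1,1)$ with the half-twist $\Delta=\sigma_1\sigma_2\sigma_1$ of $B_3$ (the alternating signs $(1,-1,1)$ are exactly what the opposite sign conventions for top and bottom boxes demand), so that $2E=\Delta^2=(\sigma_1\sigma_2)^3$ is the full twist, the generator of the center of $B_3$. The key structural observation is a permutation count: $E=\Delta$ induces the transposition of the two outer strands, so a single copy reroutes the closure arcs and genuinely changes the four-plat, whereas $\Delta^2$ is a pure braid. I would then show that attaching $\Delta^{2k}$ at the right-hand bar and closing is isotopic to closing $B$ directly, by pushing the $\Delta^{2k}$ block against the closure arcs and cancelling it; this is where evenness is essential, since only for even multiples does the block become pure and slide off. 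The flype relation can be used here to migrate the $kE$ summand guaranteed by the standard form of Lemma~\ref{lem:uniquebraid} into a convenient position before cancellation.

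For the second identity the active region is the left end of $\mathcal{T}(0,a_2,a_3,\ldots,a_n)$. Reading Fig.~\ref{fig:2bridgeclosure}, the left closure arc caps together exactly the two strands on which the $a_2$ block twists. Because the first box carries $0$ twists, the remaining strand passes straight across it, so the $a_2$ crossings sit immediately against this cap; the crossing nearest the cap is then a kink on a single capped arc and is removed by a Reidemeister~I move, and iterating unwinds the whole $a_2$ block. The trivial first box contributes nothing, and since box~$3$ is again a top box the strands re-enter the closure in the same pattern, leaving $\mathcal{A}(\mathcal{T}(a_3,\ldots,a_n))$.

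The main obstacle, in both parts, is pinning down the precise incidence of the closure arcs in Fig.~\ref{fig:2bridgeclosure} and then tracking the Reidemeister moves carefully enough to confirm the two delicate points: that in part~(1) it is exactly the even multiples of $E$ that cancel while a single half-twist does not, and that in part~(2) the $0$ box and the whole $a_2$ block disappear simultaneously while the tail is returned in standard position. As an independent check I would translate each step into the Schubert fraction $p/q$ of the resulting four-plat: the full-twist cancellation should leave $p/q$ fixed, and deleting the leading $(0,a_2)$ should reduce to an elementary continued-fraction identity compatible with the equivalence $N(p/q)=N(p/q')$, giving a verification through Schubert's classification already invoked in the paper.
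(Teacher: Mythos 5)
First, a point of comparison you could not have known: the paper never proves this lemma. It is imported verbatim from Cabrera--Ibarra \cite{cabrera2010braid} and used as a black box, so there is no in-paper argument to measure yours against; I can only assess your proposal on its merits. On those merits, most of it is right. Your identification of $E=\mathcal{T}(1,-1,1)$ with a half-twist is correct: under the paper's sign conventions $\mathcal{T}(a_1,a_2,a_3,\dots)$ reads as $\sigma_1^{a_1}\sigma_2^{-a_2}\sigma_1^{a_3}\cdots$ in $B_3$, so $E=\sigma_1\sigma_2\sigma_1=\Delta$ (up to an overall mirror convention), and $2E=\Delta^2$ is the full twist generating the center; one can confirm this is the intended reading because it makes the flype identities of Lemma \ref{lem:flypedoesn'tchangebraid} hold verbatim as equalities in $B_3$, which legitimizes your use of flypes as bookkeeping. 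Your part (2) argument is complete and correct: in the closure of Fig.~\ref{fig:2bridgeclosure} the two strands carrying the even-index boxes are capped together on each side, so when $a_1=0$ the $a_2$ crossings sit directly against the left cap and unwind one Reidemeister~I move at a time, leaving $\mathcal{T}(a_3,\dots,a_n)$.

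The genuine gap is in your justification of part (1). "The block becomes pure and slides off" is not a valid mechanism: purity is nowhere near sufficient. Concretely, $\mathcal{T}(2)$ is a pure braid sitting against the same closure arcs, yet $\mathcal{A}(\mathcal{T}(1)+\mathcal{T}(2))=\mathcal{A}(\mathcal{T}(3))=N((3))$ is the trefoil while $\mathcal{A}(\mathcal{T}(1))=N((1))$ is the unknot, so inserting a pure block before the caps can certainly change the four-plat. What is special about $\Delta^{2}$ is centrality together with a specific isotopy against this particular closure --- exactly the verification you flagged as "the main obstacle" but did not carry out. The isotopy is: undo the full twist by rotating the disk holding the three right endpoints through $-360^{\circ}$; the small cap joining the two capped strands lies entirely in that disk and is unaffected, while the single long closure arc (attached at the remaining endpoint) acquires one full wind around all three strands; that winding loop then slides off rightward over the U-turn formed by the capped pair, after which it encircles only the strand to whose endpoint it is attached and cancels as a Reidemeister~I curl. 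With this isotopy supplied, $\mathcal{A}(B+2E)=\mathcal{A}(B)$, and iteration gives the statement for all $k$ (centrality lets you always place the next $\Delta^{\pm2}$ against the closure). So: correct skeleton, correct part (2), but part (1) as written rests on a false principle and needs the displaced-closure-arc isotopy to become a proof.
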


\begin{lem}\label{lem:unknotclosure}
If $k$ is an odd integer, then $\mathcal{A}(kE)$ is equivalent to the unknot $U$. If $k$ is an even integer, then $\mathcal{A}(kE)$ is equivalent to the
the unlink.
\end{lem}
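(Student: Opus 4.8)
The plan is to reduce the statement to its two smallest instances by exploiting the periodicity of the four-plat closure recorded in Lemma~\ref{lem:propertyof2briddgeclosure}(1), and then to dispatch those base cases by inspecting the closed-up diagram directly. The point is that adding $2E$ is invisible to $\mathcal{A}$, so only the residue of $k$ modulo $2$ matters.

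First I would peel off copies of $2E$ according to the parity of $k$. If $k$ is odd, write $k=2m+1$; applying part~(1) of Lemma~\ref{lem:propertyof2briddgeclosure} with $B=E$ gives
\[
\mathcal{A}(kE)=\mathcal{A}(E+2mE)=\mathcal{A}(E).
\]
If $k$ is even, write $k=2m$ and take $B$ to be the trivial (identity) $3$-braid $I$, so that the same lemma gives
\[
\mathcal{A}(kE)=\mathcal{A}(I+2mE)=\mathcal{A}(I).
\]
Because Lemma~\ref{lem:propertyof2briddgeclosure}(1) is stated for every integer, this handles $m$ of either sign uniformly, so no separate argument is needed when $k<0$ (where $kE$ is assembled from copies of $-E$). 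At this stage the lemma is reduced to the two claims that $\mathcal{A}(I)$ is the two-component unlink and $\mathcal{A}(E)$ is the unknot.

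The even base case is immediate from Fig.~\ref{fig:2bridgeclosure}: applied to the crossingless braid $I$, the closure arcs join three parallel strands without introducing any crossing, so the diagram is visibly a split union of two unknotted circles. For the odd base case I would draw $E=\mathcal{T}(1,-1,1)$ together with its closure and simplify. As a bookkeeping aid, the handedness of the boxes is irrelevant to the induced permutation, so whether the first box sits on the top or the bottom pair, $E$ realizes $(1\,2)(2\,3)(1\,2)=(1\,3)$, the transposition exchanging the two outer strands, whereas $I$ realizes the identity; this is already consistent with $\mathcal{A}(E)$ being a knot and $\mathcal{A}(I)$ a link. It then remains to check that the single circle coming from $\mathcal{A}(E)$ is genuinely unknotted, which I expect to follow by cancelling the three crossings of $\mathcal{T}(1,-1,1)$ against one another with Reidemeister~I and~II moves once the closure arcs are in place.

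The hard part will be precisely this last diagrammatic step: everything preceding it is formal manipulation with Lemma~\ref{lem:propertyof2briddgeclosure}, whereas trivializing the closed three-crossing diagram $\mathcal{A}(E)$ requires an explicit isotopy. If the direct Reidemeister reduction is awkward to present cleanly, a backup is to first rewrite $E$ with a flype (Lemma~\ref{lem:flypedoesn'tchangebraid}, which preserves the braid type) into a form carrying a leading zero, and then invoke part~(2) of Lemma~\ref{lem:propertyof2briddgeclosure} to strip that zero, reducing $\mathcal{A}(E)$ to the closure of a strictly shorter braid that can be recognized by hand.
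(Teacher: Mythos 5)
The paper never actually proves this lemma: it is stated bare, with no attribution and no proof, between Lemma~\ref{lem:propertyof2briddgeclosure} and Theorem~\ref{thm:relatingbraidwithbridge}, so there is no ``paper proof'' to compare against and your argument is a genuine fill-in. As such it is essentially correct, and your route is the natural one: Lemma~\ref{lem:propertyof2briddgeclosure}(1) with $B=E$ (resp.\ $B=I$) collapses $\mathcal{A}(kE)$ to $\mathcal{A}(E)$ for odd $k$ and to $\mathcal{A}(I)$ for even $k$, and the two base diagrams are then checked by hand. It is also non-circular: you never invoke Theorems~\ref{thm:relatingbraidwithbridge} or~\ref{thm:braidclosureofknote} (from whose case~2, with $b=0$, the lemma could otherwise be read off), which matters because Theorem~\ref{thm:main} later uses this lemma. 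Your permutation bookkeeping correctly predicts one component for $\mathcal{A}(E)$ and two for $\mathcal{A}(I)$, and the even base case is indeed immediate from Fig.~\ref{fig:2bridgeclosure}.

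Two points need tightening. First, for negative $k$ the identity $kE=E+2mE$ (resp.\ $I+2mE$) silently uses that $E+(-E)$ is the trivial braid: the paper defines $kE$ for $k<0$ as a stack of copies of $-E$, not via a group structure, so you should say explicitly that $\mathcal{T}(1,-1,1)+\mathcal{T}(-1,1,-1)=\mathcal{T}(1,-1,0,1,-1)$ collapses to the identity by merging the boxes adjacent to the $0$ entry and cancelling. Second, in the odd base case your hope that Reidemeister~I and~II suffice does pan out, but it is worth recording why: tracing the closed-up diagram of $E$, the three crossings are (strand $A$ over $B$), ($A$ over $C$), and ($B$ over $C$), where $A$, $B$, $C$ denote the three braid strands; the arc of $A$ between its two overcrossings and the arc running from the same two crossings through a closure cap bound a bigon with $A$ on top at both corners, so one R2 followed by one R1 trivializes the diagram. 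By contrast, I would drop the flype-based backup plan: the formulas in Lemma~\ref{lem:flypedoesn'tchangebraid}, as printed, are sensitive to the position $r$ at which the flype is applied (an exponent-sum count shows they cannot hold verbatim for all $r$), so leaning on them is riskier than the direct diagrammatic check you already outline.
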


Next, we restate two theorems about the four-plat closure of braids by H. Cabrera-Ibarra in terms of our sign conventions and notations\cite{ibarra2010algorithm}. The proofs are also reformulated and included for the convenience of the reader.
\begin{thm}(H. Cabrera-Ibarra \cite{ibarra2010algorithm})\label{thm:relatingbraidwithbridge}
Given a braid, $\mathcal{T}(a_1, ..., a_n) + kE,$ one has that:
\begin{center}
$\mathcal{A}(\mathcal{T} (a_1, ..., a_n) + kE)=
\begin{cases}
N((a_1 ,...,a_n )) & \text{if $k$ is even, $n$ is odd} \\
N((a_1 ,...,a_{n-1} )) & \text{if $k$ is even, $n$ is even} \\
N((-a_1 ,...,-a_{n-1})) & \text{if $k$ is odd, $n$ is odd} \\
N((-a_1 ,...,-a_n )) & \text{if $k$ is odd, $n$ is even} 
\end{cases}$
\end{center}

\end{thm}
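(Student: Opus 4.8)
The plan is to fold the four cases into two by first reducing the integer $k$ modulo $2$, and then to settle the resulting $E$-free and single-$E$ problems separately. I would begin by invoking part 1 of Lemma \ref{lem:propertyof2briddgeclosure}, namely $\mathcal{A}(B+2mE)=\mathcal{A}(B)$, to replace $\mathcal{T}(a_1,\dots,a_n)+kE$ by $\mathcal{T}(a_1,\dots,a_n)$ when $k$ is even and by $\mathcal{T}(a_1,\dots,a_n)+E$ when $k$ is odd. This collapses the statement into a $k$-even problem and a $k$-odd problem, with only the parity of $n$ left to track inside each.

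For the even case I would prove the diagrammatic identity $\mathcal{A}(\mathcal{T}(a_1,\dots,a_n))=N((a_1,\dots,a_n))$ when $n$ is odd and $\mathcal{A}(\mathcal{T}(a_1,\dots,a_n))=N((a_1,\dots,a_{n-1}))$ when $n$ is even. The guiding picture is that the alternating top/bottom twist boxes of the braid in Fig.\ \ref{fig:convention}, with the sign conventions of Fig.\ \ref{fig:braidtwist}, match the horizontal/vertical twist regions of a rational tangle region by region, so that the plat closure of Fig.\ \ref{fig:2bridgeclosure} becomes the numerator closure of Fig.\ \ref{fig:rational}(b). The parity of $n$ decides whether the final twist box sits on the pair of strands that the closure arc caps off immediately: when $n$ is even, the $a_n$ twists become kinks and vanish by Reidemeister I moves, which accounts for the dropped last entry, whereas when $n$ is odd the last box survives and is recorded. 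I would make this rigorous by induction on $n$, using part 2 of Lemma \ref{lem:propertyof2briddgeclosure} in tandem with its tangle-side counterpart $N((0,a_2,a_3,\dots,a_n))=N((a_3,\dots,a_n))$ (immediate from the Conway correspondence \cite{conway1970enumeration}, since a leading $0$ entry sends the innermost convergent to $\infty$), and reading off the small base cases from Lemma \ref{lem:unknotclosure}.

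For the odd case I would remove the $+E$ with a single flype. Adding $E$ to the first identity of Lemma \ref{lem:flypedoesn'tchangebraid} yields
\[
\mathcal{T}(a_1,\dots,a_n)+E=\mathcal{T}(a_1,\dots,a_{r-1},\,a_r+1,\,-1,\,1-a_{r+1},\,-a_{r+2},\dots,-a_n),
\]
an $E$-free braid of length $n+1$ whose tail entries are exactly the negations $-a_{r+2},\dots,-a_n$. Choosing $r$ appropriately and applying the even case already established, I would be left with a numerator closure whose vector must be matched against the target $(-a_1,\dots,-a_{n-1})$ (for $n$ odd) or $(-a_1,\dots,-a_n)$ (for $n$ even). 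Since the flype flips the length parity and the sign of the $E$-term, the ``drop the last entry'' rule of the even case now falls on the correct index, producing the negated and possibly shortened vector claimed.

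The step I expect to be the genuine obstacle is this last reconciliation: the flype does not deliver the negated vector on the nose, so I would have to verify that the rational tangle vector it produces has the same Conway fraction as $(-a_1,\dots)$ and then appeal to Schubert's classification \cite{schubert1956knoten} (equal fractions force the same $2$-bridge closure) to identify the two numerator closures. Maintaining the top-versus-bottom sign conventions of Fig.\ \ref{fig:braidtwist} consistently throughout the continued-fraction computation, and handling the degenerate small-$n$ and boundary values of $r$ so that the parity flip aligns the dropped index in all four cases, is where the bulk of the careful bookkeeping lies.
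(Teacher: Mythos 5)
Your proposal is correct, and for the first half it coincides with the paper's proof: the paper likewise begins by invoking part 1 of Lemma \ref{lem:propertyof2briddgeclosure} to reduce $k$ modulo $2$, and its even case is exactly your diagrammatic identification, with the last box untwisting by Reidemeister I moves when $n$ is even (the paper encodes this as $\mathcal{A}(\mathcal{T}(a_1,\dots,a_n)) = \mathcal{A}(\mathcal{T}(a_1,\dots,a_{n-1},0))$). Where you genuinely diverge is the odd case. The paper uses no flypes there: it isotopes the single residual $E$ inside the plat closure and observes (Fig. \ref{fig:simplifyingclosure}) that this converts the numerator closure into a \emph{denominator} closure, then finishes with the rotation identity for rational tangles, $D((a_1,\dots,a_n,0)) = N((-a_1,\dots,-a_n))$, which produces the negated entries in a single picture. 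Your route --- absorb $E$ by a flype via Lemma \ref{lem:flypedoesn'tchangebraid}, fall back on the even case, and reconcile vectors by continued fractions --- trades that diagram manipulation for braid-word algebra, and the bookkeeping you flag as the obstacle does close up: taking $r=1$, the flype gives $\mathcal{T}(a_1+1,-1,1-a_2,-a_3,\dots,-a_n)$, and since
\[
(1-a_2)+\frac{1}{-1+\frac{1}{a_1+1}} \;=\; -a_2-\frac{1}{a_1},
\]
the resulting (possibly truncated) vector has literally the same Conway fraction as the target $(-a_1,-a_2,\dots)$, with the remaining entries $-a_3,\dots$ matching term by term; so Conway's correspondence alone identifies the closures. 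One small correction: ``equal fractions force the same $2$-bridge closure'' is Conway's theorem, not Schubert's; Schubert's classification is what you would need if you chose an $r$ for which the fractions differ --- e.g.\ $r=n$ yields fraction $-q/(a_np+q)$ against $-q/p$... in that case numerators agree and denominators are congruent modulo the numerator, which is precisely Schubert's criterion. In summary, the paper's argument is shorter and visibly geometric, with the negation explained by one rotation of the tangle; yours stays entirely inside the algebra of braid words and the fraction calculus, outsourcing all geometry to the classification theorems, at the cost of an $r$-dependent reconciliation; both treatments gloss over the same degenerate small-$n$ conventions.
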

\begin{proof}
Let $B = \mathcal{T} (a_1, ..., a_n) + kE$ be a 3-braid. The four-plat closure of $B$ can be viewed as the numerator closure of a tangle by labeling the NW, SW, NE and SE poles as shown in Fig. \ref{fig:viewedasnumerator}. \\
\textit{Case 1 ($k$ even):}
By Lemma \ref{lem:propertyof2briddgeclosure} we have that:
\begin{align*}
\mathcal{A}(\mathcal{T}(a_1, ..., a_n) + kE) = \mathcal{A}(\mathcal{T}(a_1, ..., a_n))
\end{align*}
If $n$ is odd, then labeling the points as shown in Fig. \ref{fig:viewedasnumerator} gives the numerator closure of the tangle $(a_1,...,a_n)$. Hence:
\begin{align*}
\mathcal{A}(\mathcal{T}(a_1,...,a_n) + kE) = N((a_1,...,a_n))
\end{align*}
If $n$ is even, then
\begin{align*}
\mathcal{A}(\mathcal{T}(a_1,...,a_n) + kE) = \mathcal{A}(\mathcal{T}(a_1,...,a_{n-1},0)) = N((a_1,...,a_{n-1}))
\end{align*}
\textit{Case 2 ($k$ odd)}: Suppose $k$ is odd, then $k = 2m+1$ for some $m \in \mathbb{Z}$ and by Lemma \ref{lem:propertyof2briddgeclosure} we have that:
\begin{align*}
\mathcal{A}(\mathcal{T} (a_1, ..., a_n) + (2m + 1)E) &= \mathcal{A}(\mathcal{T}(a_1, ..., a_n) + E + (2m)E)\\
&= \mathcal{A}(\mathcal{T}(a_1, ..., a_n) + E)
\end{align*}

If we label our points as shown in Fig. \ref{fig:viewedasnumerator}, then simplifying $E$ within the closure turns the numerator closure of a tangle into a denominator closure as shown by Fig. \ref{fig:simplifyingclosure}. The situation is analogous for $-E$, so we only prove the case with positive $E$. Now, if $n$ is even, then as shown in Fig. \ref{fig:simplifyingclosure}, it can be seen that we simply have the denominator closure of the tangle $(a_1,...,a_n,0$). Hence:
\begin{align*}
\mathcal{A}(\mathcal{T}(a_1, ..., a_n) + E) = D( (a_1, ... , a_n, 0)) = N((-a_1,...,-a_n)).
\end{align*}

If $n$ is odd, then
\begin{align*}
\mathcal{A}(\mathcal{T}(a_1, ..., a_n) + E) &= D( (a_1, ... , a_n))\\
&= D((a_1, ..., a_{n-1}, 0))\\
&= N((-a_1,...,-a_{n-1}))
\end{align*}

\end{proof}
\begin{figure}[!ht]
  \centering
    \includegraphics[width=0.5\textwidth]{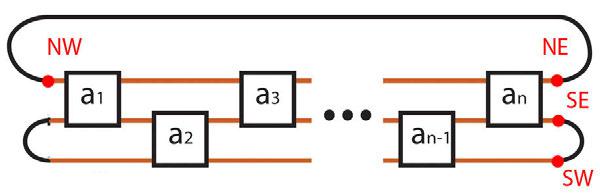}
        \caption{The four-plat closure of a braid viewed as the numerator closure of a tangle.}
        \label{fig:viewedasnumerator}
\end{figure}
\begin{figure}[!ht]
  \centering
    \includegraphics[width=0.7\textwidth]{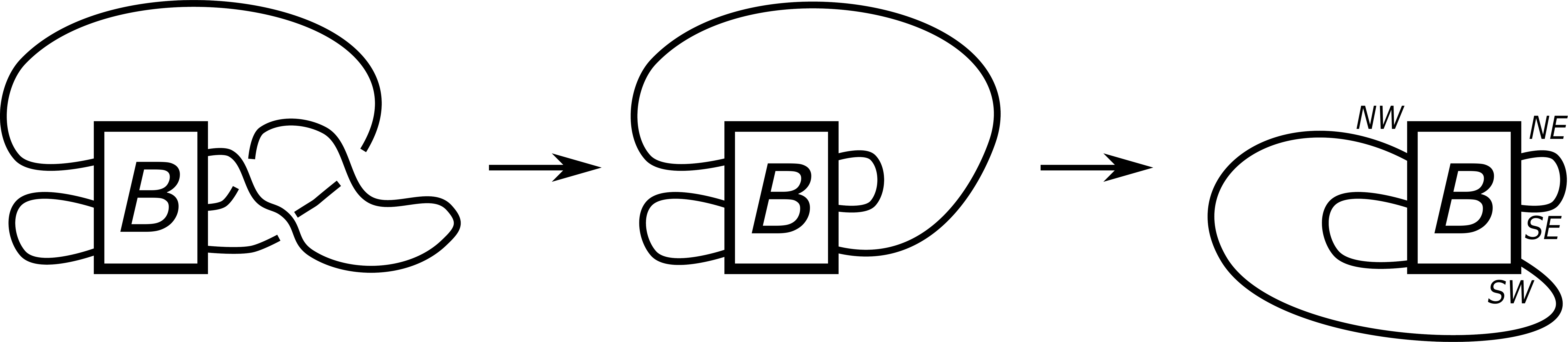}
        \caption{Simplifying the numerator closure into a denominator closure.}
        \label{fig:simplifyingclosure}
\end{figure}
\begin{thm}(H. Cabrera-Ibarra \cite{ibarra2010algorithm})\label{thm:braidclosureofknote}
Suppose that a braid $B$ admits a standard diagram with $a_1\neq 0$, then $\mathcal{A}(B)$ is the unknot if and only if $B$ admits one of the following diagrams:
\begin{align*}
&1. \mathcal{T}(\pm 1,a)+(2m)E \\
&2. \mathcal{T}(b)+(2m+1)E \\
&3. \mathcal{T}(\pm 1)+(2m)E
\end{align*}
where $a,b,m \in \mathbb{Z}.$
\end{thm}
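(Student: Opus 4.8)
The plan is to translate the four-plat closure into a numerator closure of a rational tangle by invoking Theorem~\ref{thm:relatingbraidwithbridge}, and then to detect the unknot arithmetically via Schubert's classification \cite{schubert1956knoten}: the numerator closure $N(p/q)$ of a rational tangle whose reduced fraction is $p/q$ is the unknot if and only if $p=\pm 1$ (with the convention $\infty=1/0$, so that $N(\infty)=U$ as in Result~\ref{prop:2stringeq}). Since $\mathcal{A}(B)$ depends only on $k \bmod 2$ by Lemma~\ref{lem:propertyof2briddgeclosure}(1), every standard braid $B=\mathcal{T}(a_1,\dots,a_n)+kE$ has, by Theorem~\ref{thm:relatingbraidwithbridge}, four-plat closure equal to one of $N((a_1,\dots,a_n))$, $N((a_1,\dots,a_{n-1}))$, $N((-a_1,\dots,-a_{n-1}))$, or $N((-a_1,\dots,-a_n))$, according to the parities of $k$ and $n$. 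So the whole problem reduces to deciding which \emph{standard} vectors have numerator $\pm 1$.

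The heart of the argument, and the step I expect to be the main obstacle, is the following purely arithmetic claim about standard vectors: if $w=(c_1,\dots,c_\ell)$ has all entries nonzero and of one common sign and $\ell\ge 2$, then the numerator of its fraction has absolute value at least $2$, so $N(w)$ is never the unknot; while for $\ell=1$ one has $N((c_1))=U$ exactly when $c_1=\pm 1$. I would prove this through the convergent recursion for the continued fraction $c_\ell+1/(c_{\ell-1}+\cdots+1/c_1)$: writing $s_j=P_j/Q_j$ for the $j$-th tail gives $P_1=c_1$, $Q_1=1$, and $P_j=c_jP_{j-1}+Q_{j-1}$, $Q_j=P_{j-1}$. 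A short induction shows $\gcd(P_j,Q_j)=1$, that $|P_2|\ge 2$, and that $|P_j|$ is strictly increasing for $j\ge 2$; the common sign of the $c_i$ is precisely what forces the two terms in $P_j=c_jP_{j-1}+P_{j-2}$ to add rather than cancel, so $|P_\ell|\ge 2$. This is exactly where the standard-form hypotheses $a_i\ne 0$ and $a_ia_j\ge 0$ are indispensable.

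For the \emph{only if} direction I would run the four parity cases against this claim. If $k$ is even and $n$ odd, the closure is $N((a_1,\dots,a_n))$, which forces $n=1$ and $a_1=\pm 1$, i.e.\ form~3. If $k$ is even and $n$ even, the closure is $N((a_1,\dots,a_{n-1}))$, and since $n\ge 2$ is even the claim forces $n=2$ and $a_1=\pm1$ with $a_2$ unconstrained, i.e.\ form~1. If $k$ is odd and $n$ odd, the closure is $N((-a_1,\dots,-a_{n-1}))$; for $n\ge 3$ this has length $\ge 2$ and is knotted, so $n=1$, leaving the empty vector, whose closure is $N(\infty)=U$ for \emph{every} $a_1$ (the degenerate companion of Lemma~\ref{lem:unknotclosure}), i.e.\ form~2. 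Finally, if $k$ is odd and $n$ even, the closure has length $n\ge 2$, which the claim excludes outright, so this case contributes nothing, confirming that forms 1--3 are exhaustive.

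For the \emph{if} direction I would verify each form by the same dictionary. After discarding the even multiple of $E$ via Lemma~\ref{lem:propertyof2briddgeclosure}(1), forms 1 and 3 close to $N((\pm 1))=U$ immediately by Theorem~\ref{thm:relatingbraidwithbridge}. Form 2 is the only one needing a brief computation: reducing $\mathcal{T}(b)+(2m+1)E$ to $\mathcal{T}(b)+E=\mathcal{T}(b+1,-1,1)$ and closing yields $N((b+1,-1,1))$, whose fraction simplifies to $-1/b$, of numerator $\pm 1$, hence the unknot. Assembling the two directions finishes the proof, with the arithmetic growth lemma being the only genuine difficulty and everything else amounting to bookkeeping with Theorem~\ref{thm:relatingbraidwithbridge} and Lemma~\ref{lem:propertyof2briddgeclosure}.
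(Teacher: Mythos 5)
Your proposal is correct and follows essentially the same route as the paper's proof: both reduce $\mathcal{A}(B)$ to a numerator closure of a rational tangle via Theorem~\ref{thm:relatingbraidwithbridge}, run the four parity cases in $k$ and $n$, and use the standard-form hypotheses to force the resulting vector to have length at most one. The only difference is one of rigor, not of method: your continued-fraction numerator-growth lemma (via Schubert) and your explicit verification of the ``if'' direction supply details that the paper simply asserts, e.g.\ ``Since $a_{n-1}\neq 0$, it follows that $N((a_1,\dots,a_n))$ is the unknot if and only if $n=1$ and $a_1=\pm 1$.''
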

\begin{proof}
We consider the 4 cases in Theorem \ref{thm:relatingbraidwithbridge} separately:\\

\textit{Case 1:}
Assume $k$ is even and $n$ is odd. Then,
\begin{align*}
\mathcal{A}(\mathcal{T} (a_1,...,a_n) + kE) = N((a_1,...,a_n))
\end{align*}

Since $a_{n-1}\neq 0$, it follows that $N((a_1,...,a_n))$ is the unknot if and only if $n = 1$ and $a_1 = \pm 1$. Therefore our original 3-braid is of the form $\mathcal{T}(\pm 1) + (2m)E$ where $m \in \mathbb{Z}$.\\

\textit{Case 2:}
Assume that $k$ and $n$ are both even. Then,
\begin{align*}
\mathcal{A}(\mathcal{T}(a_1,...,a_n) + kE) = N((a_1,...,a_{n-1}))
\end{align*}
We have that $N((a_1,...,a_{n-1}))$ is the unknot if and only if $n-1 = 1$ and $a_1 = \pm 1$. Therefore our original 3-braid is of the form $\mathcal{T}(\pm 1, a) + (2m)E$ where $a, m \in \mathbb{Z}$.\\

\textit{Case 3:}
Assume $k$ is odd and $n$ is odd. Then,
\begin{align*}
\mathcal{A}(\mathcal{T}(a_1,...,a_n) + kE) = N((-a_1,...,-a_{n-1}))
\end{align*}
We have two subcases. First, assume $n>1.$ We have that $N((-a_1,...,-a_{n-1}))$ is the unknot if and only if $n-1=1$. This leads to a contradiction since $n-1=1$ implies that $n=2$ and we assumed that $n$ was an odd number. Therefore we cannot have the unknot in this case.

If $n = 1$, then we see that $\mathcal{A}(\mathcal{T}(a_1) + kE)$ is the unknot. Hence, our braid is of the form $\mathcal{T}(b) + (2m + 1)E $ where $b, m \in \mathbb{Z}$.\\

\textit{Case 4:}
Assume $k$ is odd and $n$ is even. Then,
\begin{align*}
\mathcal{A}(\mathcal{T}(a_1,...,a_n) + kE) = N((-a_1,...,-a_n))
\end{align*}

We have that $N((-a_1,...,-a_n))$ is the unknot and $n = 1$. This is a contradiction since we assumed $n$ was an even number. Thus, we cannot get the unknot in this case.

\end{proof}

\subsection{Solutions to the 3-string tangle model}

We relax the assumptions made in Flapan et al.'s topological descriptions by allowing the possibility that the unknotted protein takes the shape of a four-plat closure of a 3-braid. The threading of a terminus through two loops can then be treated as a $\Gamma_2$ move as shown in Fig. \ref{fig:gammamoveonclosure}. We are interested in the knot types that result after performing the $\Gamma_2$ move in this fashion and we will show  that each of Flapan et al.'s configurations can be obtained as solutions to our tangle equation. We demonstrate this idea in Fig. \ref{fig:flapanconfig}. To be more precise, the tangle lying inside the dotted rectangle is a 3-braid. Outside of the dotted rectangle, the crossings involved in the $\Gamma_2$ move are highlighted in gray. After the $\Gamma_2$ move is performed, one can easily see that the knot we get is equivalent to one of Flapan-He-Wong's shapes after a slight deformation of the knot diagram. We now prove the main Theorem of this section.
\begin{figure}[!ht]
  \centering
    \includegraphics[width=0.7\textwidth]{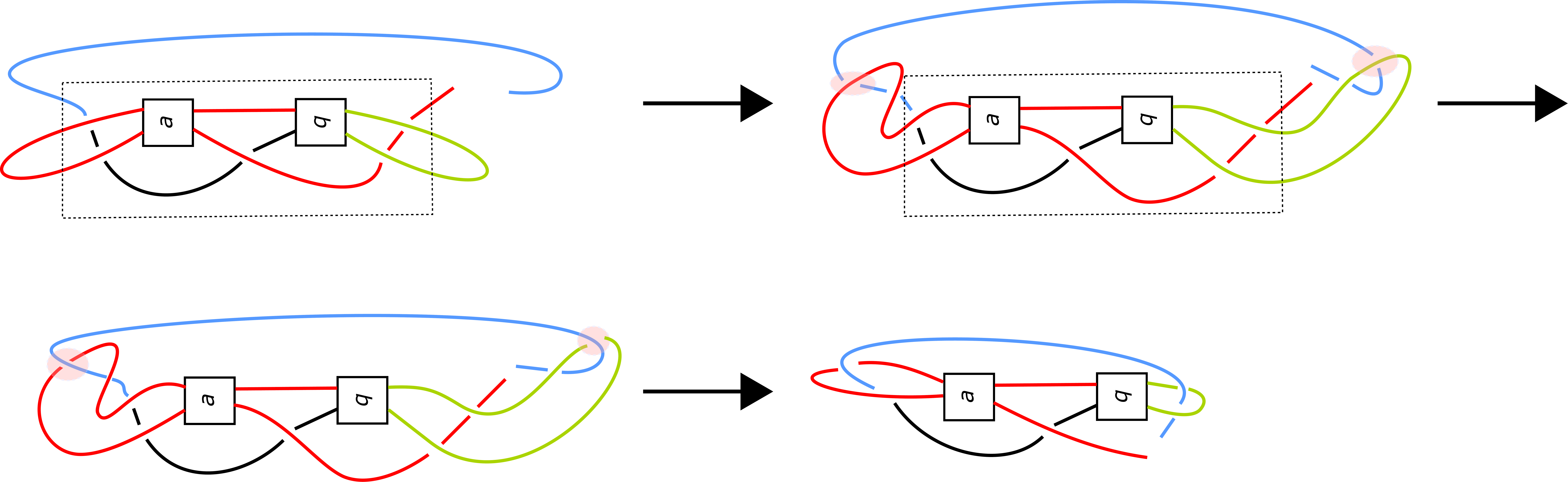}
        \caption{Obtaining Flapan-He-Wong's configurations from our 3-braid model}
        \label{fig:flapanconfig}
\end{figure}

\begin{thm}\label{thm:main}
Let $B$ be a $3$-braid such that $\mathcal{A}(B)$ is the unknot. Then, the knot that results from performing the $\Gamma_2$-move as shown in Fig. \ref{fig:2bridgeclosure} on the four-plat closure of $B$ is the unknot or one of the following forms $(n, m \in \mathbb{Z})$:
\begin{align*}
 &1.\;\ \text{Twist knots}  \\
 &2.\;\  N((-2,n+1,-3)) \;\ where \;\ n+1 \leq 0 \;\ \text{and their mirror images} \\
 &3.\;\   N((2,-n,-m,2)) \;\ where \;\ n,m \leq 0 \;\ \text{and their mirror images} \\
 &4.\;\   N((2,n-1,1,m,2)) \;\ \text{where} \;\ n,m > 0 \;\ \text{and their mirror images} 
\end{align*}
\end{thm}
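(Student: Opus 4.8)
The plan is to reduce the problem to a short list of cases using the unknotting criterion already in hand, and then to compute the effect of the $\Gamma_2$-move one case at a time. Since we are given that $\mathcal{A}(B)$ is the unknot, Theorem \ref{thm:braidclosureofknote} tells us that, after putting $B$ into the standard form of Lemma \ref{lem:uniquebraid}, $B$ is equivalent to one of the three shapes $\mathcal{T}(\pm 1, a) + (2m)E$, $\mathcal{T}(b) + (2m+1)E$, or $\mathcal{T}(\pm 1) + (2m)E$. The hypothesis $a_1 \neq 0$ in that theorem is harmless: if the first entry of the standard form vanishes, Lemma \ref{lem:propertyof2briddgeclosure}(2) deletes it and re-enters the list with a shorter braid. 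Thus it suffices to apply the $\Gamma_2$-move to the four-plat closure of each of these three families and identify the resulting knots.

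To carry out each computation, I would transport the $\Gamma_2$-move across the dictionary between four-plat closures and numerator closures of rational/Conway tangles supplied by Theorem \ref{thm:relatingbraidwithbridge}. The $\Gamma_2$-move is a local modification of the closure arcs — the terminus passing through two loops — so in the numerator-closure picture it amounts to inserting a fixed crossing pattern into the vector read off from $B$. Concretely, for each standard form I would first use Theorem \ref{thm:relatingbraidwithbridge} to write $\mathcal{A}(B)$ as some $N((\cdots))$, being careful that an odd power of $E$ converts the numerator closure into a denominator closure (equivalently, it negates and reindexes the vector, exactly as in Case 2 of that proof), and then record the vector produced once the two loop-threading crossings are inserted.

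The classification step is then a normalization of these vectors. Using the flype identities of Lemma \ref{lem:flypedoesn'tchangebraid}, the reduction rules of Lemma \ref{lem:propertyof2briddgeclosure}, and Conway's theorem that a rational tangle is determined by its fraction, I would simplify each output into one of the advertised normal forms. I expect the simplest threadings to collapse to the twist-knot family, while the remaining cases distribute among the length-three, length-four, and length-five vectors $(-2, n+1, -3)$, $(2, -n, -m, 2)$, and $(2, n-1, 1, m, 2)$, with $n$ and $m$ recording the free parameters $a$ and $b$ of $B$. Taking the opposite sign of the inserted $\Gamma_2$ crossings produces the mirror images, which accounts for the ``and their mirror images'' clauses, and the ranges $n+1 \leq 0$, $n, m \leq 0$, and $n, m > 0$ fall out of the sign conventions for horizontal versus vertical twists once the handedness of the inserted crossings is fixed.

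The main obstacle I anticipate is the bookkeeping in the second step rather than any single hard idea: keeping the sign conventions consistent through the numerator/denominator switch forced by the parity of the $E$-power, and verifying that the loop-threading crossings really do land in the claimed slots of the vector. Getting the inequalities on $n, m$ and the mirror-image dichotomy exactly right is where care is essential, since a sign error there changes the knot type outright; the flype moves of Lemma \ref{lem:flypedoesn'tchangebraid} are the tool that lets me push entries around until each vector matches one of the four normal forms without altering the underlying knot.
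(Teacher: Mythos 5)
Your reduction of the case $a_1 = 0$ is where the proposal breaks down, and it is not a bookkeeping issue. Lemma \ref{lem:propertyof2briddgeclosure}(2) is an equality of \emph{closures}, $\mathcal{A}(\mathcal{T}(0,a_2,a_3,\ldots,a_n)) = \mathcal{A}(\mathcal{T}(a_3,\ldots,a_n))$, not an equivalence of braids. It does let you decide \emph{when} $\mathcal{A}(B)$ is the unknot, but the theorem asks for the knot obtained by performing the $\Gamma_2$-move on the closure of the \emph{original} braid $B$, and that knot is the closure of $\mathcal{T}(-2)+B+\mathcal{T}(2)$ (Fig. \ref{fig:gammamoveonclosure}). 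The inserted $\mathcal{T}(-2)$ adds to the first entry of $B$, so when $a_1=0$ the previously invisible prefix $(0,a_2)$ becomes visible: $\mathcal{T}(-2)+\mathcal{T}(0,a_2,a_3,\ldots,a_n)+kE = \mathcal{T}(-2,a_2,a_3,\ldots,a_n)+kE$, whereas your reduced braid would give $\mathcal{T}(-2+a_3,a_4,\ldots,a_n)+kE$; these have different closures in general. This is exactly where the paper's Case 2 produces families 3 and 4 of the statement: $\mathcal{T}(0,a_2,b)+(2m+1)E$ yields $N((2,-a_2,-b,2))$, and $\mathcal{T}(0,a_2,\pm 1,a_4)+(2m)E$ yields $N((-2,a_2,\pm 1,a_4,2))$. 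You can see that your shortcut loses information purely by counting parameters: the three unknot forms of Theorem \ref{thm:braidclosureofknote} each carry at most one free integer, so your procedure can only output one-parameter families of knots, yet families 3 and 4 depend on two free parameters $n,m$. Following your outline one would recover only the unknot, the twist knots, and family 2, i.e.\ roughly half of the conclusion.

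A secondary error: the mirror images do not come from ``taking the opposite sign of the inserted $\Gamma_2$ crossings.'' The $\Gamma_2$-move of the theorem is a fixed local move whose handedness is part of the model, so you are not free to flip it; in the paper the mirror images arise from the sign choices \emph{inside} the braid normal forms, e.g.\ $\mathcal{T}(+1,a)$ versus $\mathcal{T}(-1,a)$ in Case 1, and Subcase 2C, where $\mathcal{T}(0,a_2,+1)+(2m)E$ leads to $N((2,a_2-1,3))$, the mirror of family 2. Apart from these two points, your outline of the computation for $a_1\neq 0$ (absorb $\mathcal{T}(-2)$ into the first entry, flype $\mathcal{T}(2)$ past the $kE$ block using Lemma \ref{lem:flypedoesn'tchangebraid}, convert to a numerator closure via Theorem \ref{thm:relatingbraidwithbridge}, then normalize continued fractions) does match the paper's Case 1.
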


\begin{proof}
By Lemma \ref{lem:uniquebraid}, we can put $B$ in standard form so that 
\begin{center}
    $B = \mathcal{T}(a_1, ..., a_n) + kE$ 
\end{center}
for some $k \in \mathbb{Z}$. By definition of the standard form of a braid, we have $a_i \cdot a_j > 0$ for all $i,j \in {1,...,n}$ and $a_i \neq 0$ for all $i > 1$. There are several cases to consider.

\textit{Case 1}: If $a_1\neq 0$ then by Theorem \ref{thm:braidclosureofknote}, there are three cases in which the four-plat closure gives the unknot: either $B = \mathcal{T}(b) + (2m + 1)E$, $B = \mathcal{T}(\pm 1,a) + (2m)E$, or $B = \mathcal{T} (\pm 1) + (2m)E.$ We will discuss what knots result from the $\Gamma_2$ move in each case separately.

\textit{Subcase 1:} Assume our braid is of the form $\mathcal{T}(b) + (2m + 1)E$. We start our calculations by noticing that the closure of an arbitrary braid, $\mathcal{T}(a_1, ..., a_n) + kE$, after the $\Gamma_2$ move is just the closure of a new braid, $\mathcal{T}(-2) + \mathcal{T} (a_1, ..., a_n) + kE + \mathcal{T} (2)$ as shown in Fig. \ref{fig:gammamoveonclosure}.
Thus, if we start out with $\mathcal{T}(b) + (2m + 1)E$, then we have the four-plat closure of the braid below to consider after the $\Gamma_2$ move is performed:
\begin{equation}\label{eq:1}
\mathcal{T}(-2) + \mathcal{T}(b) + (2m + 1)E+ \mathcal{T} (2) = \mathcal{T} (b - 2) + (2m + 1)E + \mathcal{T} (2).
\end{equation}
Applying $2m + 1$ flype moves to equation \ref{eq:1} in between $(2m + 1)E$ and $\mathcal{T}(2)$ and performing Reidemeister moves gives:
\begin{align*}
\mathcal{T}(b -2) + (2m + 1)E + \mathcal{T}(2) &=\mathcal{T}(b - 2) + \mathcal{T}(0, -2) + (2m + 1)E. \\ &= \mathcal{T}(b -2, -2) + (2m + 1)E
\end{align*}
Now that we have simplified the new braid, we consider its four-plat closure to determine what knot results. By Theorem \ref{thm:relatingbraidwithbridge} we have that:
\begin{align*}
\mathcal{A}(\mathcal{T}(b - 2, -2) + (2m + 1)E) &= N ((2 - b, 2)).
\end{align*}
The numerator closure of $(2 - b, 2)$ can easily be visualized and results in twist knots except when $b = 2$ or 3, in which case we get the unknot.

\textit{Subcase 2:} Suppose that our braid is of the form $\mathcal{T} (\pm 1, a_2) + (2m)E$. Then, we have the four-plat closure of the braid below to consider after the $\Gamma_2$ move is performed:

\begin{align*}
\mathcal{T}(-2) + \mathcal{T}(\pm 1, a_2) + (2m)E + \mathcal{T}(2) = \mathcal{T}(\pm1 - 2, a_2) + (2m)E + \mathcal{T}(2).
\end{align*}
Applying $2m$ flype moves in between $(2m)E$ and $\mathcal{T}(2)$ gives:
\begin{align*}
\mathcal{T}(\pm 1 - 2, a_2) + (2m)E + \mathcal{T}(2) = \mathcal{T}(\pm1 - 2, a_2, 2) + (2m)E
\end{align*}
By Theorem \ref{thm:relatingbraidwithbridge} we have that:
\begin{align*}
\mathcal{A}(\mathcal{T}(\pm1 - 2, a_2, 2) + (2m)E) = N ((\pm1 - 2, a_2, 2))
\end{align*}

Now, for the new first entry, $a_1'$, in our new braid we have that either $a_1' = 1-2 = -1$ or $a_1' = -1-2=-3$.
If $a_1' =-1$, then $a_2 > 0$ because our braid was assumed to be in standard form and we have:
\begin{align*}
N((-1,a_2,2)) &= N\left(2+\dfrac{1}{a_2+\dfrac{1}{-1}}\right)\\
&= N\left(2+\dfrac{1}{a_2-1}\right).
\end{align*}
In the tangle above, one $-1$ horizontal twist followed by $a_2$ vertical twists is the same as $a_2 -1$ vertical twists. Thus, this can be viewed as the numerator closure of $(a_2 -1, 2)$, which can be easily visualized. It is the unknot when $a_2 = 1$,  or a twist knot when $a_2 > 1$.

Now, if $a'_1 = -1-2 = -3$, then $a_2 < 0$ because our braid was assumed to be in standard form. Now, it is desirable to give the solutions where the diagrams have minimal number of crossings. We make use of
the fact that $N((a_1,...,a_n)) = N((a_n,...,a_1))$, so we have that $N((-3,a_2,2)) =
N((2,a_2,-3))$. This gives the following:
\begin{align}
N((2,a_2,-3))&=N\left(-3+\dfrac{1}{a_2+\dfrac{1}{2}}\right)\\
&= N\left(-3+\dfrac{1}{(a_2+1)+\dfrac{1}{-2}}\right)\label{eq:e3}.
\end{align}
Equation \ref{eq:e3} implies that $N((2,a_2,-3))=N((-2,a_2 +1,-3))$. We also have that $a_2 + 1 \leq 0$. Therefore, if $a_2 = -1,$ then $N((-2,a_2 + 1,-3)) = N((-2,0,-3)) = N((-5))$ which is the mirror image of the torus knot $5_1$. Otherwise, if $a_2 < -1$, then $a_2 + 1 \leq  -1$. Furthermore, the numerator closure $N((-2,a_2 + 1,-3))$ is in standard form and gives a minimal diagram.

\textit{Subcase 3}:
Now, assume our braid is of the form $\mathcal{T}(\pm 1)+(2m)E$. Then, we have the four-plat closure of the braid below to consider after the $\Gamma_2$-move is performed:
\begin{align*}
\mathcal{T}(-2) + \mathcal{T}(\pm 1) + (2m)E + \mathcal{T}(2) = \mathcal{T}(\pm1 - 2) + (2m)E + \mathcal{T}(2)
\end{align*}
Applying $2m$ flype moves in between $(2m)E$ and $\mathcal{T}(2)$ gives:
\begin{align*}
\mathcal{T}(\pm1 - 2) + (2m)E + \mathcal{T}(2) &= \mathcal{T}(\pm1 - 2 + 2) + (2m)E \\ &= \mathcal{T}(\pm1) + (2m)E
\end{align*}
By Theorem \ref{thm:braidclosureofknote} we have that $\mathcal{A}(\mathcal{T}(\pm1) + (2m)E)$ is the unknot. So, in this subcase we can only get the unknot. 

This concludes the case when $a_1\neq 0$ and we now consider the case when $a_1 = 0$.

\textit{Case 2:}
Assume $a_1 = 0$ in the original braid, $\mathcal{T}(a_1, ..., a_n) + kE.$ 

\textit{Subcase 1:}
Assume $n = 2$, then the braid is of the form $\mathcal{T}(0, a_2) + kE$, and by Lemma \ref{lem:propertyof2briddgeclosure}, it follows that $\mathcal{A}(\mathcal{T} (0, a_2) + kE) = \mathcal{A}(kE)$. 

By Lemma \ref{lem:unknotclosure}, if $k$ is even then the closure above is the unlink. If $k$ is odd, then the closure is the unknot. Our model is only concerned with cases where we start out with the unknot so we only consider the case where $k$ is odd.

Assume $k$ is odd. Then, $k = 2m+1$ for some $m \in \mathbb{Z}$. With our new closure shown in Fig. \ref{fig:gammamoveonclosure} we consider the following braid:
\begin{align*}
\mathcal{T}(-2) + \mathcal{T}(0, a_2) + (2m + 1)E + \mathcal{T}(2)\\ = \mathcal{T}(-2, a_2) + \mathcal{T}(0, -2) + (2m + 1)E\\
=\mathcal{T}(-2,a_2-2)+(2m+1)E
\end{align*}
By Theorem \ref{thm:relatingbraidwithbridge} we have that:

\begin{align*}
\mathcal{A}(\mathcal{T}(-2,a_2-2)+(2m+1)E)&=N((2,2-a_2))\\
&=N\left(2-a_2+\dfrac{1}{2}\right).
\end{align*}
The numerator closure of $(2, 2 - a_2)$ can be easily visualized and is a twist knot except in the cases where $a_2 = 2$ or $a_2 = 3$ which gives the unknot. This finishes the case where $n = 2.$

\textit{Subcase 2:}
Now, we assume $n > 2$, then the braid is of the form $\mathcal{T}(0, a_2, a_3, ..., a_n) + kE$. By Lemma \ref{lem:propertyof2briddgeclosure} it follows that the four-plat closure of our braid is $\mathcal{A}(\mathcal{T} (0, a_2, a_3, ..., a_n) + kE) = \mathcal{A}(\mathcal{T}(a_3, ..., a_n) + kE)$. Since our braid was in standard form, we have that $a_3\neq 0$. Therefore, by Theorem \ref{thm:braidclosureofknote}, the four-plat closure is the unknot only when the braid has one of the following forms:

\begin{align*}
\mathcal{T}(0, a_2, b) + (2m + 1)E \\
\mathcal{T}(0, a_2, \pm 1, a_4) + (2m)E \\
\mathcal{T}(0, a_2, \pm1) + (2m)E.
\end{align*}
where $a_2, a_4, b, m \in \mathbb{Z}$. We consider each case separately.\\

\textit{Subcase 2A}:
Assume our braid is of the form $\mathcal{T} (0,a_2,b)+(2m+1)E$, then our new closure pictured in Fig. \ref{fig:gammamoveonclosure} allows us to consider the four-plat closure of the new braid:

\begin{align*}
\mathcal{T}(-2) + \mathcal{T}(0, a_2, b) + (2m + 1)E + \mathcal{T}(2) \\ = \mathcal{T}(-2, a_2, b) + \mathcal{T} (0, -2) + (2m + 1)E\\
= \mathcal{T}(-2, a_2, b, -2) + (2m + 1)E
\end{align*}

By Theorem \ref{thm:relatingbraidwithbridge}, it follows that $\mathcal{A}(\mathcal{T}(-2, a_2 , b, -2) + (2m + 1)E ) = N ((2, -a_2 , -b, 2))$.

If $b < 0$, then $a_2 < 0$ because the braid we started out with was in standard form. In this case our numerator closure is in standard form and we have a minimal diagram. Now, if $b > 0$, then $a_2 > 0$ because the braid we started our with was in standard form. In this case, our numerator closure is not in standard form. To put the numerator closure in standard form we observe that:
\begin{align*}
N((2,-a_2,-b,2)) &= N\left(2+\dfrac{1}{-b+\frac{1}{-a_2+\frac{1}{2}}}\right)\\
&= N\left(2+\dfrac{1}{-b+\frac{1}{(-a_2+1)+\frac{1}{-2}}}\right).
\end{align*}
We can see that $N((2,-a_2,-b,2)) = N((-2,-a_2+1,-b,2))$. We also have that $N((-2,-a_2+1,-b,2)) = N((2,-b,-a_2+1,-2))$ giving:
\begin{align*}
N((2,-b,-a_2+1,-2)) &= N\left(2+\dfrac{1}{(-a_2+1)+\frac{1}{-b+\frac{1}{2}}}\right)\\
&= N\left(-2+\dfrac{1}{(-a_2+1)+\frac{1}{(-b+1)+\frac{1}{-2}}}\right).
\end{align*}
Hence, $N((2,-b,-a_2 +1,-2)) = N((-2,-b+1,-a_2 +1,-2))$ and our numerator closure is in standard form in this case also since $-b + 1, -a_2 + 1 \leq 0.$ This concludes Subcase 2A.

\textit{Subcase 2B:}
Assume our braid is of the form $\mathcal{T}(0, a_2, \pm1, a_4) + (2m)E$. It follows that

\begin{align*}
\mathcal{T}(-2) + \mathcal{T}(0, a_2, \pm1, a_4) + (2m)E + \mathcal{T}(2) \\ = \mathcal{T}(-2, a_2, \pm 1, a_4) + \mathcal{T}(2) + (2m)E\\
= \mathcal{T}(-2, a_2, \pm1, a_4, 2) + (2m)E.
\end{align*}
By Theorem \ref{thm:relatingbraidwithbridge}, we have that 
\begin{center}
    $\mathcal{A}(\mathcal{T} (-2, a_2 , \pm1, a_4 , 2) + (2m)E ) = N ((-2, a_2 , \pm1, a_4 , 2))$,
\end{center}
and it follows that either $a_3 = 1$ or $a_3 = -1$. If $a_3 =1$, then $a_2,a_4 >0$ and we have:

\begin{align}
N((-2,a_2,1,a_4,2)) &= N\left(2+\dfrac{1}{a_4+\frac{1}{1+\frac{1}{a_2+\frac{1}{-2}}}}\right)\\
&= N\left(2+\dfrac{1}{a_4+\frac{1}{1+\frac{1}{(a_2-1)+\frac{1}{2}}}}\right) \label{e6}.
\end{align}
From equation \ref{e6}, we see that $N((-2,a_2,1,a_4,2)) = N((2,a_2-1,1,a_4,2))$.

Since $a_4 > 0$ and $a_2 - 1 \geq 0$, it follows that the numerator closure $N((2,a_2 - 1, 1, a_4, 2))$ is in standard form and therefore a minimal diagram.

Finally, if $a_3 = -1$, then $a_2, a_4 < 0$ and we have that $N((-2,a_2,-1,a_4,2)) = N((2,a_4,-1,a_2,-2))$. Expanding this out we get:
\begin{align}
N((2,a_4,-1,a_2,-2)) &= N\left(-2+\dfrac{1}{a_2+\frac{1}{-1+\frac{1}{a_4+\frac{1}{2}}}}\right)\\
&= N\left(-2+\dfrac{1}{a_2+\frac{1}{-1+\frac{1}{(a_4+1)+\frac{1}{-2}}}}\right) \label{e10}.
\end{align}
Equation \ref{e10} implies that $N((2,a_4,-1,a_2,-2))=N((-2,a_4 +1,-1,a_2,-2))$. Since $a_4 + 1 \leq 0$ and $a_2 < 0$, it follows that the numerator closure $N((-2,a_4 + 1, -1, a_2, -2))$ is in standard form.

\textit{Subcase 2C}:
Now suppose our braid is of the form $\mathcal{T}(0, a_2, \pm1) + (2m)E$. It follows that

\begin{align*}
\mathcal{T}(-2) + \mathcal{T}(0, a_2, \pm1) + (2m)E + \mathcal{T}(2) \\= \mathcal{T}(-2, a_2, \pm1) + \mathcal{T}(2) + (2m)E\\
= \mathcal{T}(-2, a_2, \pm1 + 2) + (2m)E.
\end{align*}
It follows that $\mathcal{A}(\mathcal{T} (-2, a_2 , \pm 1 + 2) + (2m)E ) = N ((-2, a_2 , \pm1 + 2))$.

Now, for the third entry, $a_3'$, in our new braid we have that either $a_3' = 1 + 2 = 3$ or
$a_3' = -1 + 2 = 1.$
If $a_3' =1+2=3$, then $a_2 >0$ and we have:
\begin{align}
N((-2,a_2,3))&=N\left(3+\dfrac{1}{a_2+\frac{1}{-2}}\right)\\
&=N\left(3+\dfrac{1}{(a_2-1)+\frac{1}{2}}\right)\label{e19}.
\end{align}

If we look at equation \ref{e19} we can see that $N((-2,a_2,3)) = N((2,a_2 - 1,3)).$ We also have that $a_2 - 1 \geq 0.$ Therefore, if $a_2 = 1$, then $N((2,a_2 - 1,3)) = N((2,0,3)) = N((5))$ which is the torus knot $5_1$. Otherwise, if $a_2 > 1$, then $a_2 - 1 \geq 1$ and the numerator closure $N ((2, a_2 - 1, 3))$ is in standard form and gives a minimal diagram.

If $a_3' =-1+2=1$, then $a_2 <0$ and we note that $N((-2,a_2,1)) = N((1,a_2,-2))$. It follows that

\begin{align*}
N((1,a_2,-2))&=N\left(-2+\dfrac{1}{a_2+\frac{1}{1}}\right).
\end{align*}
In the tangle $N((1,a_2,-2))$, we note that one +1 horizontal twist followed by one $a_2$ vertical twists is the same as $a_2 + 1$ vertical twists. Thus, this can be viewed as the numerator closure of $(a_2 + 1, -2)$ which can be easily visualized. It is the unknot when $a_2 =-1$ or $a_2 =-2$. If $a_2<-2$ then we get a twist knot.

This completes the proof of Theorem \ref{thm:main}.
\end{proof}
\begin{figure}[!ht]
  \centering
    \includegraphics[width=0.5\textwidth]{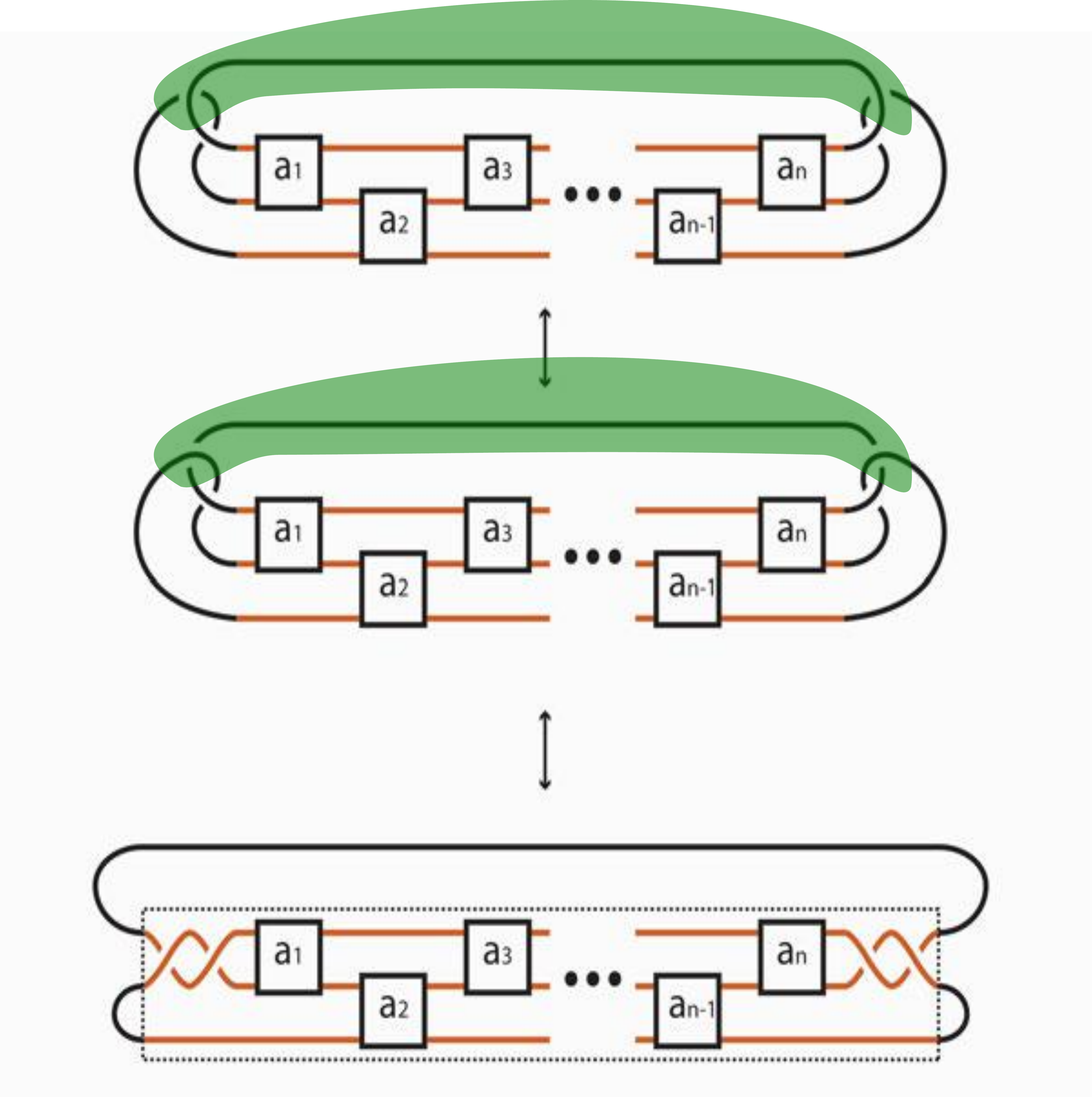}
        \caption{The four-plat closure of a braid transformed by a $\Gamma_2$-move into the closure of a new braid.}
        \label{fig:gammamoveonclosure}
\end{figure}
\begin{figure}[!ht]
  \centering
    \includegraphics[width=0.3\textwidth]{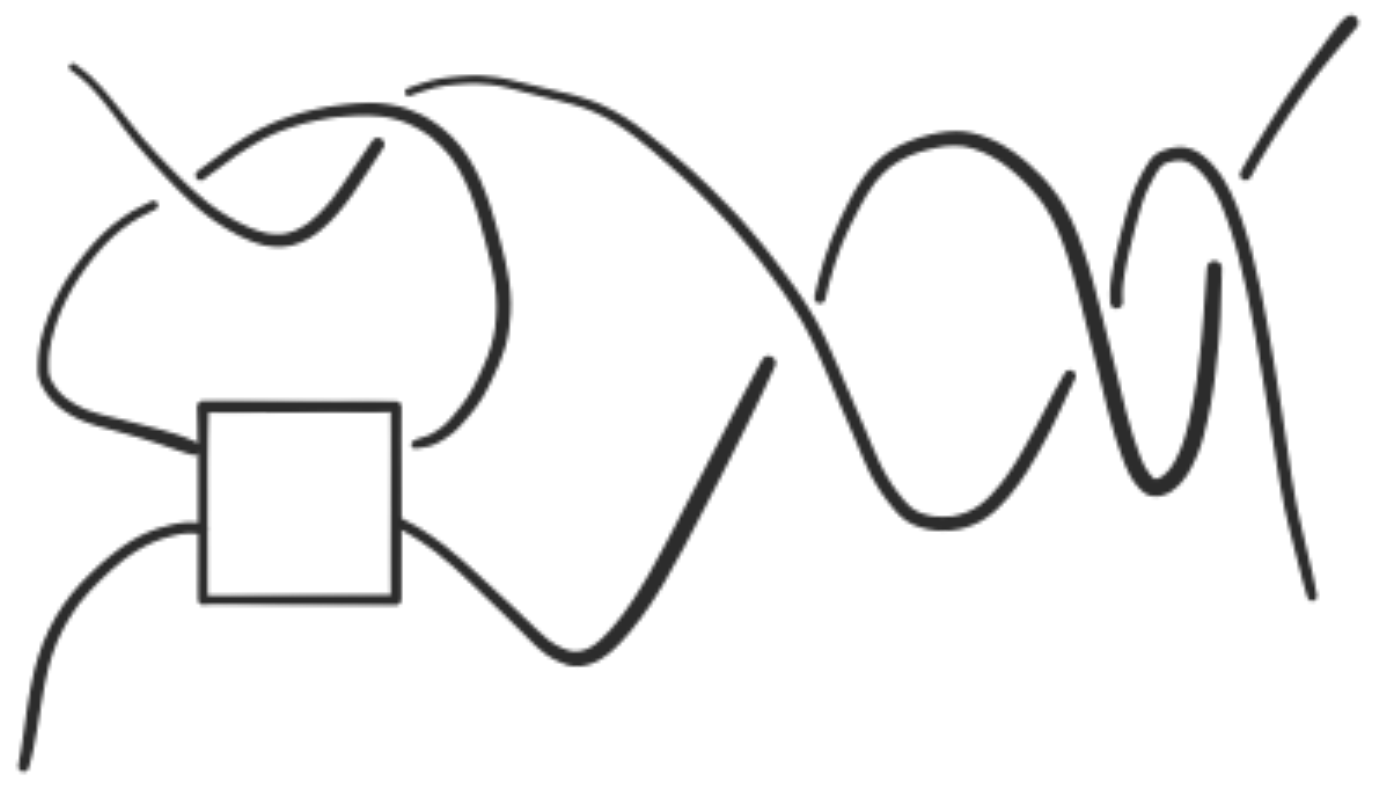}
        \caption{The 3-braid tangle solution $N((-2,n+1,3))$, which is the second family found in Theorem \ref{thm:main}. Here, the box represents the integer tangle $n+1$.}
        \label{fig:secondsol}
\end{figure}
\begin{figure}[!ht]
  \centering
    \includegraphics[width=0.3\textwidth]{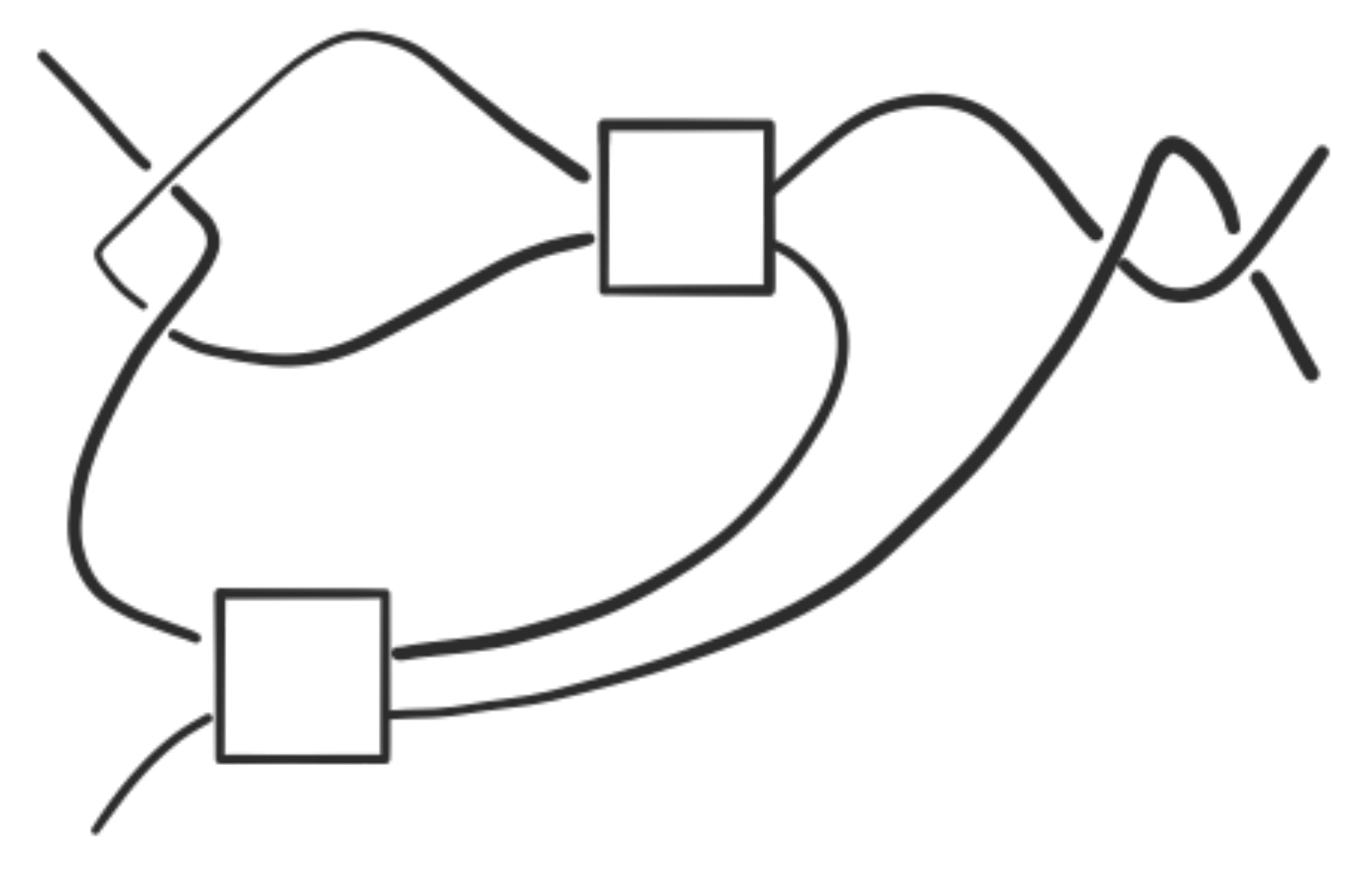}
        \caption{The 3-braid tangle solution $N((2,-n,-m,2))$, which is the third family found in Theorem \ref{thm:main}. The top (resp. bottom) box represents the integer tangle $-n$ (resp. $-m$).}
        \label{fig:thirdsol}
\end{figure}
\begin{figure}[!ht]
  \centering
    \includegraphics[width=0.3\textwidth]{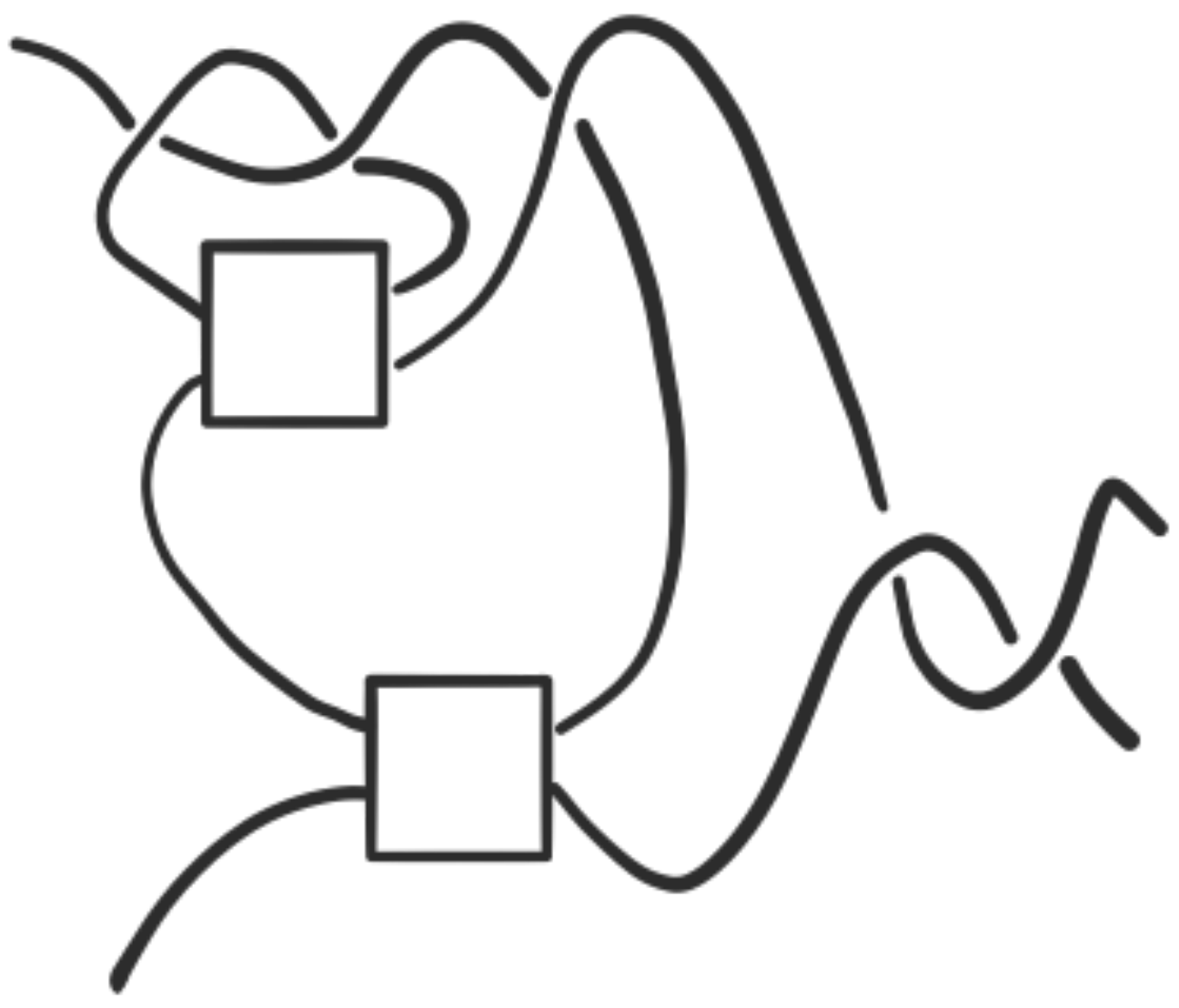}
        \caption{The 3-braid tangle solution $N((2,n-1,1,m,2))$, which is the fourth family found in Theorem \ref{thm:main}. The top (resp. bottom) box represents the integer tangle $-n-1$ (resp. $m$).}
        \label{fig:finsol}
\end{figure}

Twist knots were discussed before as they arose in our 2-string tangle model. The remaining three families of solutions consist of numerator closures of the tangles depicted in Fig. \ref{fig:secondsol}, \ref{fig:thirdsol}, and \ref{fig:finsol}. We can obtain the knot $+5_1$ and its mirror image, which have not been found in proteins from the second family in Theorem \ref{thm:main} by setting $n+1 = 0$. The only knots with 6 crossings are $+6_1,-6_1,+6_2, -6_2,$ and $6_3$. The $+6_1$ and $-6_1$ knots are twist knots. The $+6_2$ and $-6_2$ knots are particular instances of the second family in Theorem \ref{thm:main}. The knot $6_3$ belongs to the third family in Theorem \ref{thm:main}. The first interesting knot in the fourth family from Theorem \ref{thm:main} that has not been mentioned before is the knot $+7_7$. Paying attention to the orientation, Theorem \ref{thm:main} also gives 4-plat knots $K$ such that $d_{\overrightarrow{\Gamma_2}}(K,unknot) =1.$ This oriented version of distance is discussed in the following section.

\section{Oriented moves arising in proteins}\label{section:orientedmoves}
\subsection{The oriented $\Gamma_2$ move}
If one assigns a direction to the folded DehI protein, one sees that the green region has an arrow pointed in the opposite direction compared to the arrow on red region. One can then think of the $6_1$-knot on DehI as being obtained from the unknotted protein by letting the blue strand passes through the red and green portion, where the arrows on are forced to point in certain fixed directions. Such a local move on knot diagrams, which we will denote as the $\overrightarrow{\Gamma}_2$-move, has been studied by Shibuya and Kanenobu \cite{shibuya2000local,kanenobu2014sh}. Before we state our result, we remind the readers of the definitions of relevant local moves. A \textit{coherent band surgery} is a local move depicted on the left of Figure \ref{fig:bandanddelta}. A \textit{$\Delta$-move} is a local move depicted on the right of Figure \ref{fig:bandanddelta}. We denote by $d_{band}(K,K')$ (resp. $d_{\Delta}(K.K')$) the minimum number of coherent band surgeries needed (resp. the minimum number of $\Delta$ moves needed) to transform $K$ into $K'$. Reidemeister moves are allowed in the intermediate stages. We use the following observations to estimate $d_{\overrightarrow{\Gamma}_2}(K,K')$:
\begin{figure}[ht!]
  \centering
    \includegraphics[width=0.5\textwidth]{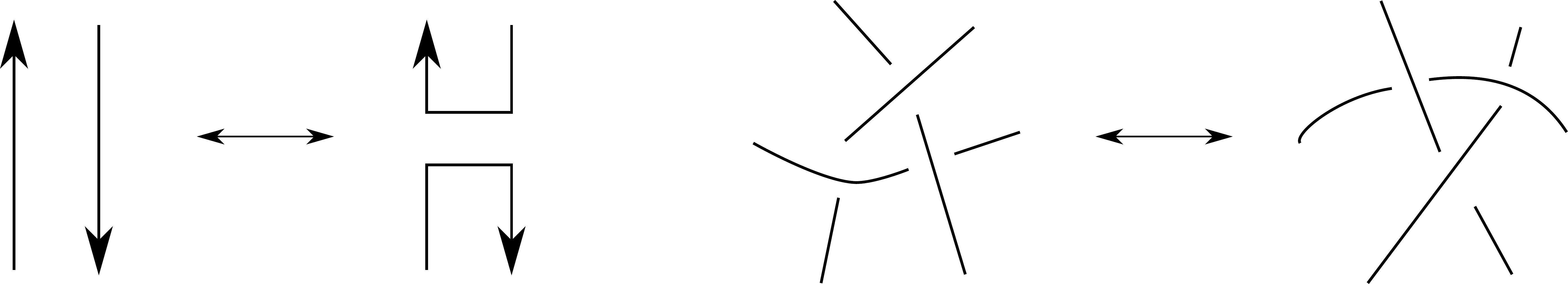}
      \caption{(Left) A coherent band surgery. (Right) A delta move. .}\label{fig:bandanddelta}
\end{figure}
\begin{prop}[Prop. 4.1 and Prop. 4.3 from \cite{kanenobu2014sh}]
Let $K$ and $K'$ be two knots or two knotoids. Then, the following statements hold:

\begin{enumerate}
    \item $2d_{band}(K.K')\leq d_{\overrightarrow{\Gamma}_2}(K,K') \leq d_{\Delta}(K,K')$.
    \item $d_{\overrightarrow{\Gamma}_2}(K,K')\leq d(K,K') \leq 2d_{\overrightarrow{\Gamma}_2}(K,K').$
\end{enumerate}

\end{prop}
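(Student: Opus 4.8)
The plan is to derive all four inequalities from a single elementary principle about local moves: if one move of type $X$ can always be realized, inside a ball meeting the diagram in a controlled way, by at most $c$ moves of type $Y$ (with ambient isotopy and Reidemeister moves allowed in the intermediate stages), then a shortest $X$-sequence from $K$ to $K'$ converts into a $Y$-sequence at most $c$ times as long, giving $d_Y(K,K') \le c\, d_X(K,K')$. Thus the whole proposition reduces to exhibiting four local pictures and reading off the constants. A convenient feature of this strategy is that every substitution I use is supported in a small ball; as long as that ball avoids the two endpoints, the same picture applies verbatim to knotoid diagrams, and this is exactly what lets the statement cover both knots and knotoids at once.

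For part (2), I would first realize an ordinary crossing change as a single $\overrightarrow{\Gamma}_2$ move: threading a terminus past a strand so that one of the two guiding loops is trivial degenerates the threading into a crossing switch, so every crossing-change sequence is in particular a $\overrightarrow{\Gamma}_2$ sequence, giving $d_{\overrightarrow{\Gamma}_2}(K,K') \le d(K,K')$. For the upper bound I would exhibit one $\overrightarrow{\Gamma}_2$ move as the result of two crossing changes, one at each of the two loops the strand is pulled through, so that a shortest $\overrightarrow{\Gamma}_2$ sequence of length $n$ becomes a crossing-change sequence of length $2n$; this yields $d(K,K') \le 2\, d_{\overrightarrow{\Gamma}_2}(K,K')$ and completes the chain in (2).

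For part (1), the right-hand inequality is again a one-picture substitution: the triangle of three strands in a $\Delta$-move is precisely a threading of one strand through the region bounded by the other two with coherent orientations, so a $\Delta$-move is realized by a single oriented $\overrightarrow{\Gamma}_2$ move and $d_{\overrightarrow{\Gamma}_2}(K,K') \le d_{\Delta}(K,K')$. The left-hand inequality is where the factor of two and the coherence hypothesis enter. Here I would split a single $\overrightarrow{\Gamma}_2$ move into two coherent band surgeries, one band guiding the strand through each loop, with orientations arranged so that both bands attach coherently; conversely I would match coherent bands against $\overrightarrow{\Gamma}_2$ moves in pairs. The essential bookkeeping is that a coherent band surgery changes the number of components by one, so between two knots (or two one-arc knotoids) such surgeries must occur an even number of times and group naturally into consecutive pairs. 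Matching each $\overrightarrow{\Gamma}_2$ move with its pair of coherent bands, together with this parity, is exactly what produces the factor $2$ relating $d_{band}$ and $d_{\overrightarrow{\Gamma}_2}$ recorded in part (1).

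The main obstacle I expect is this last comparison with band surgery. The two crossing-change substitutions and the $\Delta$-to-$\overrightarrow{\Gamma}_2$ substitution are essentially pictorial once the orientations are pinned down, but the band-surgery step forces care on two fronts simultaneously: first, the bands must be attached \emph{coherently}, which is precisely why the oriented $\overrightarrow{\Gamma}_2$ move, rather than its unoriented cousin, is the right object, so I must verify that the oriented geometry of the DehI-type threading genuinely admits such coherent bands; and second, the component-count argument must be run carefully so that each pair of band surgeries attached to a single $\overrightarrow{\Gamma}_2$ move returns a one-component object, and so that the intermediate links do not escape the category in which the distances are defined. Once the coherence and parity checks are in place, the four local substitutions assemble directly into the two displayed chains of inequalities.
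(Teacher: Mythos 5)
The first thing to note is that the paper contains no proof of this proposition at all: it is imported verbatim (as the bracketed attribution says) from Propositions 4.1 and 4.3 of Kanenobu's paper, so there is no ``paper's own proof'' to compare you against, and your attempt has to be judged on its own terms. Your overall strategy --- realize one local move by a bounded number of moves of another type inside a small ball away from the endpoints, then multiply along a shortest sequence --- is the standard route and is essentially how such inequalities are established in the cited literature. It handles part (2) and the right-hand inequality of part (1) correctly: a crossing change is one oriented $\Gamma_2$ move in which one guiding loop is degenerate, one oriented $\Gamma_2$ move is two crossing changes (one per loop), and a $\Delta$-move is realized by a single oriented $\Gamma_2$ move.

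The genuine gap is in the left-hand inequality of part (1). Your substitution --- one $\overrightarrow{\Gamma}_2$ move split into two coherent band surgeries --- proves $d_{band}(K,K')\leq 2\,d_{\overrightarrow{\Gamma}_2}(K,K')$, i.e.\ $\tfrac{1}{2}d_{band}\leq d_{\overrightarrow{\Gamma}_2}$. That is not the inequality stated, which reads $2\,d_{band}\leq d_{\overrightarrow{\Gamma}_2}$, and no argument can prove the stated version because it is false: by your own parity observation, a coherent band surgery changes the number of components, so $d_{band}(K,K')\geq 2$ for any two distinct knots, while the paper's own distance table records $d_{\overrightarrow{\Gamma}_2}(0_1,3_1)=1$, giving $2d_{band}(0_1,3_1)\geq 4>1$. (The inequality as transcribed in the paper is a misquote of Kanenobu; the correct statement is exactly the one your band-splitting argument yields.) Your write-up papers over this mismatch: you assert that splitting each $\overrightarrow{\Gamma}_2$ move into two bands, together with parity, is ``exactly what produces the factor $2$ recorded in part (1),'' but the factor lands on the wrong side of the inequality. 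Moreover, your ``conversely, match coherent bands against $\overrightarrow{\Gamma}_2$ moves in pairs'' step is not justified --- two consecutive coherent band surgeries need not be supported in a single $\overrightarrow{\Gamma}_2$ ball --- and even if it were, it would prove $2d_{\overrightarrow{\Gamma}_2}\leq d_{band}$, a third, different inequality. What you should have done is flag the directional inconsistency, exhibit the counterexample above, and state plainly that your construction proves $d_{band}\leq 2\,d_{\overrightarrow{\Gamma}_2}$, which is the inequality the proposition should assert.
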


The main result of this section is the table at the end of the section displaying the values $d_{\overrightarrow{\Gamma}_2}(K,K')$ for various knots $K,K'$. If the entries are blue, then the distances are found by drawing explicit diagrams and finding where $\overrightarrow{\Gamma}_2$ moves can be preformed (see Fig. \ref{fig:41to31} and Fig. \ref{fig:41to63}). The justification for the values colored red is the triangle inequality (as these distances satisfy axioms of metric spaces).

\begin{figure}[ht!]
  \centering
    \includegraphics[width=0.5\textwidth]{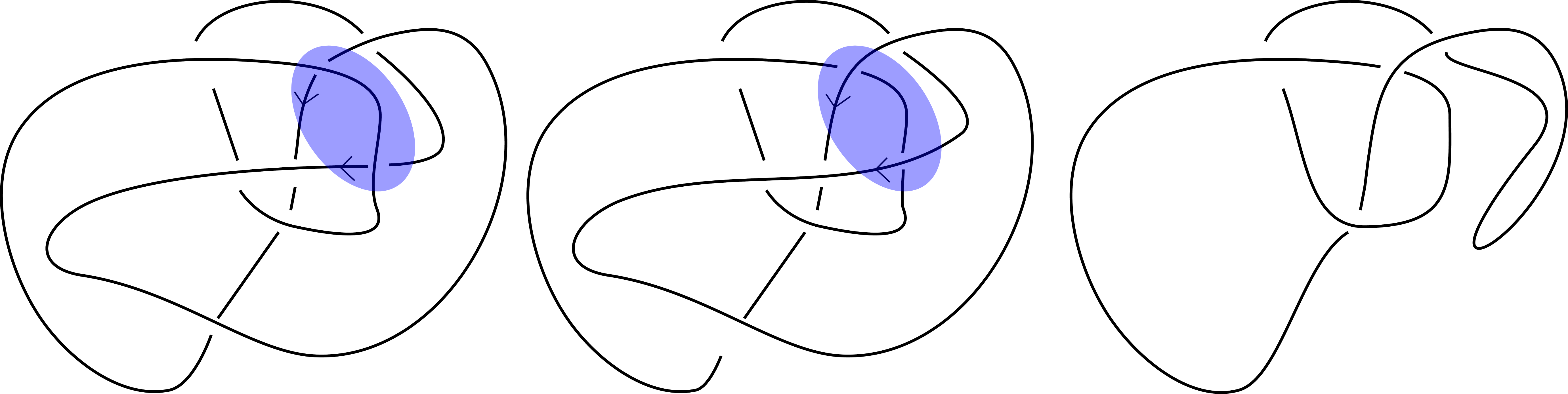}
      \caption{(Left) A non-minimal diagram $D$ of the figure-eight knot. (Middle) A diagram $D'$ after an oriented $\Gamma_2$-move move has been performed on $D$. (Right) After some Reidemeister moves, we get a trefoil knot. The trefoil of opposite handedness can be obtained this way if we switch all the crossings of $D$ and repeat the process.}\label{fig:41to31}
\end{figure}
\begin{figure}[ht!]
  \centering
    \includegraphics[width=0.5\textwidth]{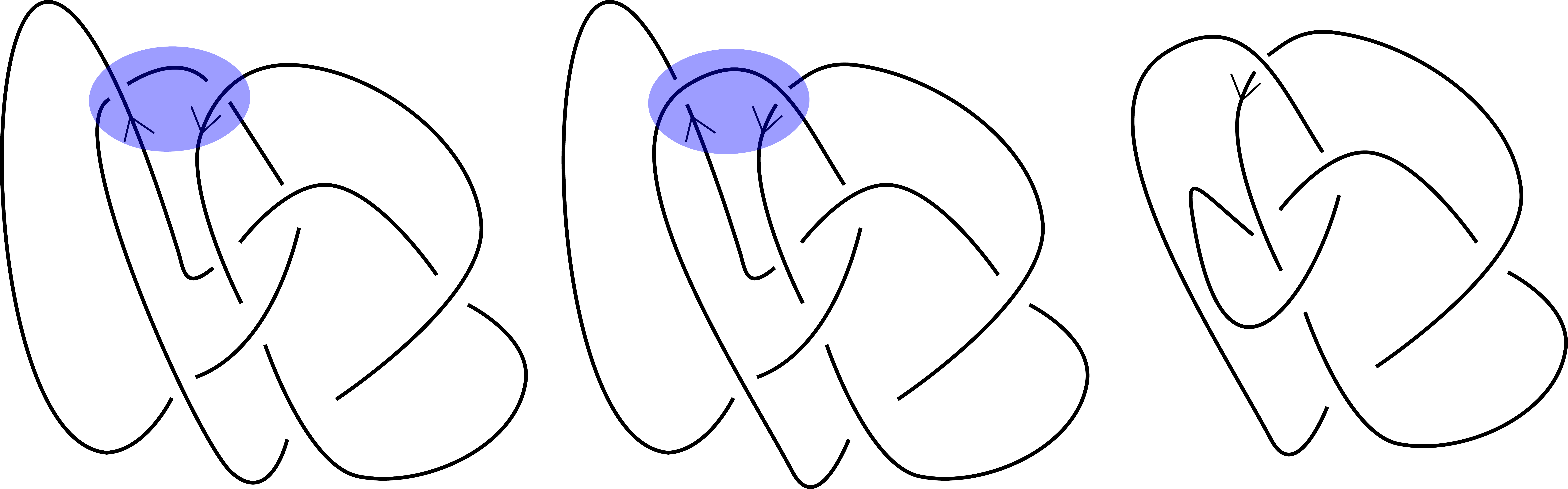}
      \caption{(Left) A non-minimal diagram $D$ of the figure-eight knot. (Middle) A diagram $D'$ after an oriented $\Gamma_2$-move move has been performed on $D$. (Right) After some Reidemeister moves, we get the $6_3$ knot.}\label{fig:41to63}
\end{figure}

\begin{center}
    \begin{tabular}{| l | l | l | l | l | l | l | l | }
    \hline
  & $3_1$ & $4_1$ & $5_1$ & $5_2$ & $6_1$ & $6_2$ & $6_3$  \\ 
  \hline
       $0_1$ & 1 & 1& 2&1  &1  &1 & 1\\
  \hline
   $3_1$ & 0  & \textcolor{blue}{1}  & 1 & 1 & 2 & 1 & 1  \\ \hline
$3_1^*$& 2 & \textcolor{blue}{1} & 3 & 2 & 1-2 & 2 & 1 \\ \hline
   $4_1$ & \textcolor{blue}{1} & 0 & \textcolor{red}{2}& 1-2 & 1 & 1& \textcolor{blue}{1}\\
    \hline
       $5_1$ & 1 & \textcolor{red}{2} & 0& 1 & 2-3 &1-2 &2 \\
    \hline  
    $5_1^*$ & 3 & 2-3 & 4& 3 & 2-3 & 3 & 2\\
    \hline
       $5_2$ & 1 &1-2 &1 & 0 & 1-2 &1-2 &1-2  \\
    \hline
       $5_2^*$ & 2 &1-2 &3 & 2 & 1-2 &2 &1-2  \\
    \hline
       $6_1$ & 2 &1 & 2-3& 1-2 & 0 &1 & 1-2 \\
    \hline
       $6_1^*$ & 1-2 &1 & 2-3& 1-2 & 1 &1-2 &1-2  \\
    \hline
       $6_2$ & 1 &1 & 1-2&1-2  & 1 & 0& 1-2\\
    \hline
       $6_2^*$ & 2 & 1& 3& 2 & 1-2 & 2& 1-2\\
    \hline
    \end{tabular}
\end{center}
\subsection{Signed forbidden moves}
In \cite{barbensi2021f}, the authors considered the forbidden distances between two knotoids. To investigate chirality bias, we can analyze the signed versions of the forbidden move.

The authors of \cite{barbensi2021f} pointed out that a forbidden moved performed on a knotoid corresponds to a crossing change performed on the knots obtain by closing up the endpoints. We can use the same idea to bound the signed forbidden moves. The first author's thesis \cite{darcythesis} will be useful in calculating this quantity.

\begin{exmp}
Any positive forbidden move performed on the knotoid $2_1$ induces a crossing change of its overpass closure $+3_1$. However, as mentioned in the first author's thesis \cite{darcythesis}, it is not possible to transform $+3_1$ by positive crossing changes only. Thus, even though the forbidden distance between $2_1$ and the trivial knotoid is 1, it is not possible to go from $2_1$ to the trivial knotoid by only using positive forbidden moves.
\end{exmp}

\begin{figure}[ht!]
  \centering
    \includegraphics[width=0.5\textwidth]{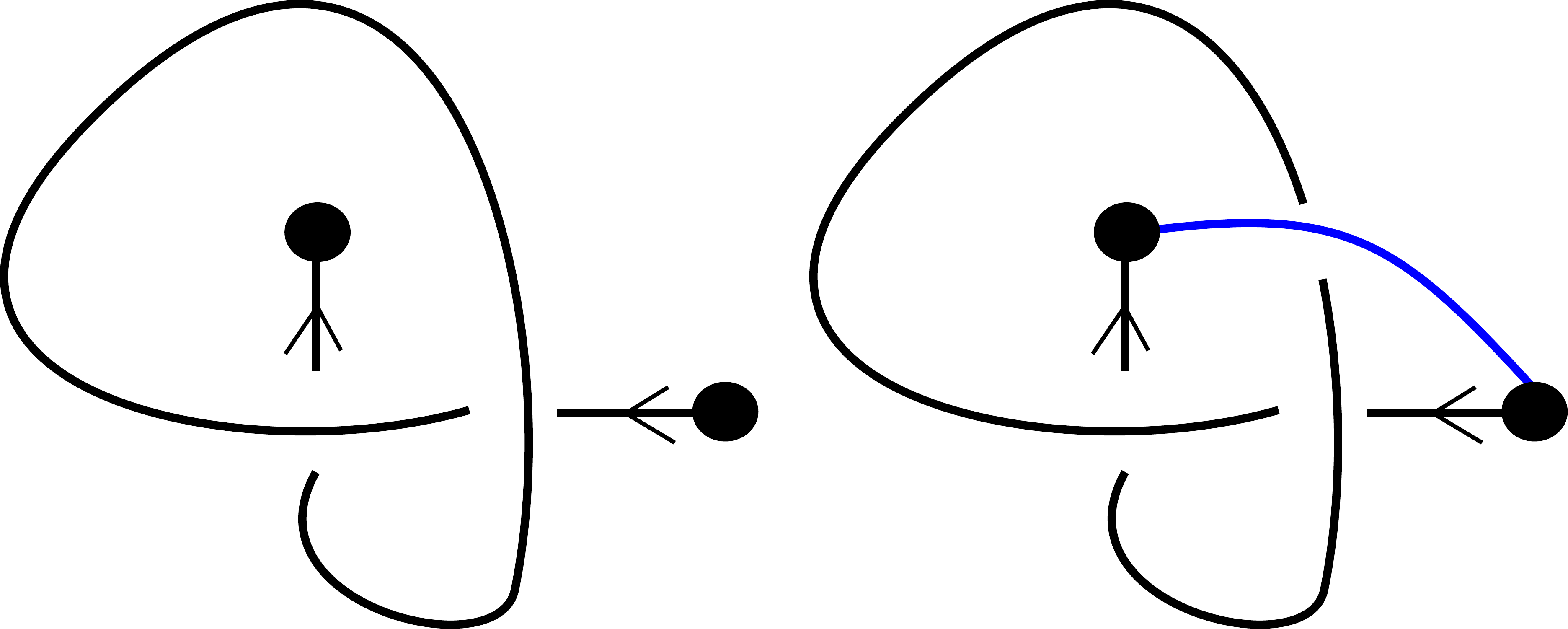}
      \caption{The knotoid $2_1$ and its overpass closure (the positive trefoil).}\label{fig:signedtrefoil}
\end{figure}
\section{More general models using 3-string tangles}\label{section:moregeneral3-stringmodels}
\subsection{Wagon wheel graph model}

In previous sections, we assumed that the protein originally takes the form of a rational closure of an unknotted 3-braid before the folding process. A 3-braid is a sub-collection of tangles that belong to the class of rational 3-string tangles. To obtain a new perspective, we can suppose that the initially, the protein takes the form of the unknot in a standard wagon wheel graph form.

\begin{figure}[ht!]
  \centering
    \includegraphics[width=0.5\textwidth]{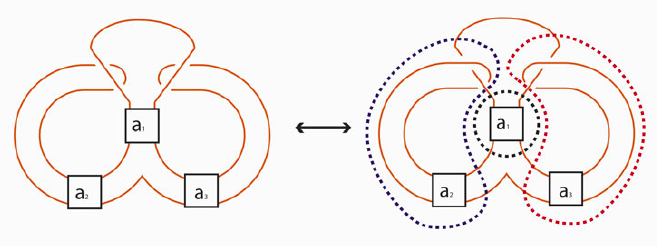}
      \caption{The wagon wheel graph model. The box $a_1$ is a vertical twist. The boxes $a_2$ and $a_3$ are horizontal twists.}\label{fig:wagon}
\end{figure}

\begin{res}\label{res:Montesinos}
    Let $T$ be a rational 3-string tangle in standard wagon wheel graph form. Then, the knot resulting from performing the $\Gamma_2$-move at the closure is a Montesinos knot $(\frac{2}{2a_2+1},\frac{1}{a_1},\frac{2}{a_3-1}).$
\end{res}
\begin{proof}
This follows from explicitly drawing the diagram. More specifically, after the move is performed, there are two additional half twists that can be combined with $a_2$ to yield a rational tangle $\frac{2}{2a_2+1}$. Also, there are two additional half twists that can be combined with $a_3$ to yield a rational tangle $\frac{2}{a_3+1}$. For instructions on how to compute the fraction corresponding to a rational tangle, the readers can consult Section \ref{Section:Taylor}
\end{proof}
We remark that Result \ref{res:Montesinos} gives Montesinos knots each differs from the unknot by one $\overrightarrow{\Gamma_2}$ move. Using this model, we get solutions that we cannot obtain from Theorem \ref{thm:main}. Namely, Montesinos knots are not necessarily 4-plat knots.
\subsection{3-braid closure model}
To obtain the granny knot, we consider a 3-braid model with a different closure. We will make use of the following classification result by Birman and Menasco \cite{birman1993studying}: If a closure of a 3-braid is the unknot, then the 3-braid is a conjugate of $\sigma_1\sigma_2, \sigma_1^{-1}\sigma_2^{-1},$ or $\sigma_1\sigma_2^{-1}.$ Note that Birman and Menasco use the notation $\sigma$, but we used the notation $E$ to mean the same thing when we introduced 3-braids earlier. 
\begin{res}\label{res:BirmanMenasco}
    Let $B$ be a 3-braid whose braid closure is the unknot. Then, the knot resulting from performing the $\Gamma_2$-move at the closure is a 3-braid knot where the braid can take one of the following three forms: $\mathcal{T}(-2,w,1,1,-w,2), \mathcal{T}(-2,w,-1,-1,-w,2),$ or $\mathcal{T}(-2,w,1,-1,-w,2).$
\end{res}
\begin{figure}[ht!]
  \centering
    \includegraphics[width=0.5\textwidth]{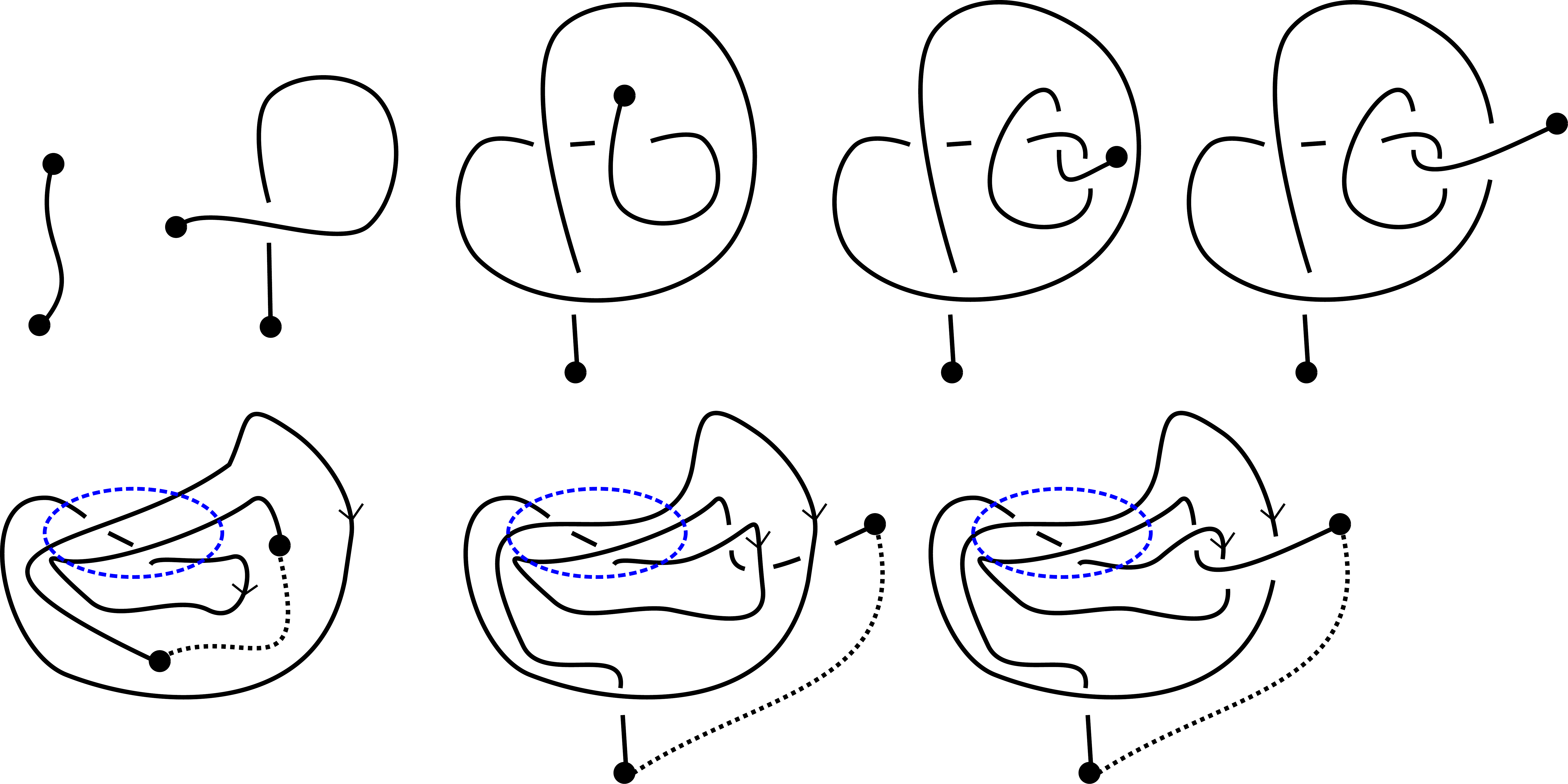}
      \caption{Getting the granny knot}\label{fig:granny}
\end{figure}

\begin{proof}
By a result of Birman and Menasco, the 3-braid $B$ whose closure is the unknot is conjugate to $\mathcal{T}(1,1),\mathcal{T}(-1,-1),$ or $\mathcal{T}(1,-1)$. In other words, the braid $B$ has the form $\mathcal{T}(w,1,1,-w), \mathcal{T}(w,-1,-1,-w),$ or $\mathcal{T}(w,1,-1,-w),$ where $w$ is an arbitrary braid word. The $-2$ and $2$ terms in the theorem statement comes from performing the $\Gamma_2$-move.
\end{proof}

For instance, when the 3-braid is $\mathcal{T}(1,-1),$ we get the trefoil. Result \ref{res:BirmanMenasco} gives us a way to construct 3-braid knots $K$ such that $d_{\overrightarrow{\Gamma_2}}(K,unknot) = 1$. Since the Stevedore's knot from B{\"o}linger et al.'s simulation is a 4-braid knot, it does not arise from the model in Result \ref{res:BirmanMenasco}, but we can get $3_1\# 3_1$.

\bibliographystyle{plain}
\bibliography{ref}

\end{document}